\documentclass{article}

\usepackage{amsthm}
\newtheorem{theorem}{Theorem}
\newtheorem{lemma}{Lemma}
\newtheorem{corollary}{Corollary}

\usepackage{mathtools}
\DeclareMathOperator{\NE}{NE}
\DeclareMathOperator{\NW}{NW}
\DeclareMathOperator{\SE}{SE}
\DeclareMathOperator{\SW}{SW}

\usepackage{microtype}
\usepackage{xcolor}

\usepackage{slantsc}
\AtBeginDocument{%
    \DeclareFontShape{OT1}{cmr}{m}{scit}{<->ssub*lmr/m/scsl}{}%
}

\makeatletter
\g@addto@macro\bfseries{\boldmath}
\makeatother

\binoppenalty = \maxdimen
\relpenalty = \maxdimen

\begin{document}

\title{Drawing Two Posets}
\author{Guido Br\"uckner \and Vera Chekan}
\date{}

\maketitle

\begin{abstract}
    We investigate the problem of drawing two posets of the same ground set so that one is drawn from left to right and the other one is drawn from the bottom up.
    The input to this problem is a directed graph~$G = (V, E)$ and two sets~$X, Y$ with~$X \cup Y = E$, each of which can be interpreted as a partial order of~$V$.
    The task is to find a planar drawing of~$G$ such that each directed edge in~$X$ is drawn as an~$x$-monotone edge, and each directed edge in~$Y$ is drawn as a~$y$-monotone edge.
    Such a drawing is called an \emph{$xy$-planar} drawing.

    Testing whether a graph admits an~$xy$-planar drawing is~\textsf{NP}-complete in general.
    We consider the case that the planar embedding of~$G$ is fixed
    and the subgraph of~$G$ induced by the edges in~$Y$ is a connected spanning subgraph of~$G$ whose upward embedding is fixed.
    For this case we present a linear-time algorithm that determines whether~$G$ admits an~$xy$-planar drawing and, if so, produces an~$xy$-planar polyline drawing with at most three bends per edge.
\end{abstract}

\section{Introduction}
\label{sec:introduction}

A partial order~$<$ over a set~$V$ can be interpreted as a directed graph~$G$ by interpreting the elements of~$V$ as the vertices of~$G$, and interpreting the fact~$u < v$ for~$u, v \in V$ as a directed edge from~$u$ to~$v$ in~$G$.
In an upward drawing of~$G$, every directed edge is drawn as an increasing~$y$-monotone curve.
Such a drawing is a bottom-up visualization of the partial order~$<$.
If the drawing is also planar, then it is especially easy to understand for humans~\cite{DBLP:conf/gd/Purchase97}.
Testing graphs for upward planarity is \textsf{NP}-hard in general~\cite{DBLP:journals/siamcomp/GargT01}, but feasible in linear time for graphs with a single source~\cite{DBLP:journals/siamcomp/BertolazziBMT98} and for graphs with a fixed underlying planar embedding~\cite{DBLP:journals/algorithmica/BertolazziBLM94}.

More recent research has sought to extend this concept to two directions.
The input to \textsc{Bi-Monotonicity} is an undirected graph whose vertices have fixed coordinates and the task is to draw each edge as a curve that is both~$x$-monotone and~$y$-monotone while maintaining planarity.
This problem is \textsf{NP}-hard in general~\cite{DBLP:journals/talg/KlemzR19}.
In \textsc{Upward-Rightward Planarity}, the question is whether there exists a planar drawing of a directed graph in which each edge is~$x$-monotone or~$y$-monotone.
Every planar directed graph has an upward-rightward straight-line drawing in polynomial area that can be computed in linear time~\cite{DBLP:conf/iisa/GiacomoDKLM14}.
The input to \textsc{HV-Rectilinear Planarity} is an undirected graph~$G$ with vertex-degree at most four where each edge is labeled either as horizontal or as vertical.
The task is to find a planar drawing of~$G$ where each edge labeled as horizontal (vertical) is drawn as a horizontal (vertical) line segment.
This problem is \textsf{NP}-hard in general~\cite{DBLP:journals/jcss/DidimoLP19}.
A windrose graph consists of an undirected graph~$G$ and for each vertex~$v$ of~$G$ a partition of its neighbors into four sets that correspond to the four quadrants around~$v$.
A windrose drawing of~$G$ is a drawing such that for each vertex~$v$ of~$G$ each neighbor lies in the correct quadrant and all edges are represented by curves that are monotone with respect to each axis.
Testing graphs for windrose planarity is \textsf{NP}-hard in general, but there exists a polynomial-time algorithm for graphs with a fixed underlying planar embedding~\cite{DBLP:journals/talg/AngeliniLBDKRR18}.

In this paper, we investigate a new planarity variant called \emph{$xy$-planarity}.
Let~$G = (V, E)$ be a directed graph together with two sets~$X, Y$ with~$X \cup Y = E$.
In an \emph{$xy$-drawing} of~$G$ each directed edge in~$X$ is drawn as a strictly increasing~$x$-monotone curve and each directed edge in~$Y$ is drawn as a strictly increasing~$y$-monotone curve.
Hence, an~$xy$-drawing of~$G$ is a left-to-right visualization of the partial order defined by~$X$ and a bottom-up visualization of the partial order defined by~$Y$, i.e., it is a drawing of two posets on the same ground set~$V$.
A planar~$xy$-drawing is an \emph{$xy$-planar} drawing.
The study of~$xy$-planarity has been proposed by Angelini et al.~\cite{DBLP:journals/talg/AngeliniLBDKRR18}.
Because~$xy$-planarity generalizes both upward planarity and windrose planarity, we immediately obtain the following.

\begin{theorem}
    Testing graphs for~$xy$-planarity is \textsf{NP}-complete.
\end{theorem}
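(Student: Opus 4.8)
The plan is to prove the two directions separately: \textsf{NP}-hardness will be immediate from the generalization remark preceding the statement, while membership in \textsf{NP} takes a bit more care, resting on a short certificate together with a normalization argument.

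For \textbf{hardness}, I would reduce from upward-planarity testing, which is \textsf{NP}-hard~\cite{DBLP:journals/siamcomp/GargT01}. Given a directed graph $G = (V, E)$, emit the $xy$-planarity instance $(G, X, Y)$ with $X = \emptyset$ and $Y = E$, so that $X \cup Y = E$ as required and the empty relation $X$ is the (vacuously valid) trivial partial order. Under this labelling an edge carries no $x$-monotonicity constraint and every edge must be drawn as a strictly increasing $y$-monotone curve, so a planar $xy$-drawing of $(G, \emptyset, E)$ is exactly an upward planar drawing of $G$, and conversely; hence $G$ is upward planar iff $(G, \emptyset, E)$ is $xy$-planar. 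The reduction is trivially polynomial (indeed logarithmic-space). One could equally start from windrose-planarity testing, but the reduction from upward planarity is the cleanest.

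For \textbf{membership in \textsf{NP}}, the idea is to guess a polynomial-size description of a witnessing drawing and verify it in polynomial time. I would take as certificate a drawing in which every edge is a polygonal line and all vertices and bends have integer coordinates bounded by a fixed polynomial $p(|V|)$; this has polynomial bit-length, and verifying it is routine: check via pairwise segment tests that the drawing is planar, and check for each edge that it is drawn strictly $x$-monotone increasing when the edge lies in $X$ and strictly $y$-monotone increasing when it lies in $Y$. What remains is a \emph{normalization lemma}: every $xy$-planar instance already admits such a drawing, with a polynomially bounded number of bends per edge on a grid of polynomial size. To prove it I would start from an arbitrary $xy$-planar drawing, perturb it into general position, and reroute the edges one at a time inside thin tubes around their curves, using the rigidity of the monotonicity constraints — for instance, an $x$-monotone edge whose $x$-range contains that of a ``wall-like'' $y$-monotone edge would be forced to cross it, which planarity forbids — to argue that each edge can be realized as a staircase with few turns, and then scaling to snap the result onto an integer grid. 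An alternative that sidesteps counting bends is to fix the planar embedding induced by the given drawing, additionally guess a polynomially sized labelling of the angles encoding the combinatorial ``shape'' data, in the spirit of the upward-consistent assignments of Bertolazzi et al.~\cite{DBLP:journals/algorithmica/BertolazziBLM94} and the quadrant assignments of Angelini et al.~\cite{DBLP:journals/talg/AngeliniLBDKRR18}, and then certify realizability by testing feasibility of an associated linear program.

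The \textbf{main obstacle} is the normalization lemma: the reduction, the polynomial-time verification of a drawing, and the linear-programming feasibility test are all routine, whereas bounding the combinatorial complexity of a witnessing $xy$-planar drawing requires ruling out pathological drawings in which edges wiggle or spiral unboundedly to weave around obstacles. I expect to handle this by a charging argument that blames each turn of an edge on a vertex or edge of $G$ that the corresponding monotone sub-curve is forced to separate from one of that edge's endpoints.
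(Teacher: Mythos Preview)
Your hardness argument via $X=\emptyset$, $Y=E$ is exactly the reduction the paper intends when it says that $xy$-planarity generalizes upward planarity; this direction matches the paper's one-line justification.

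For membership in \textsf{NP}, the paper gives no explicit argument where the theorem is stated, but its later machinery supplies a far cleaner normalization than the charging argument you sketch. Section~\ref{sec:windrose-drawings-derived-from-xy-drawings} shows that from any $xy$-planar drawing one can extract a derived windrose-planar graph $G^*$ in which every edge of $G$ is replaced by a two-edge gadget (Lemma~\ref{lem:xy-drawing-to-windrose-drawing-weak}); since every windrose-planar graph has a one-bend-per-edge drawing on a polynomial grid~\cite{DBLP:journals/talg/AngeliniLBDKRR18}, one gets immediately that every $xy$-planar graph admits a polyline drawing on a polynomial grid with at most three bends per edge. That is precisely your normalization lemma, obtained for free from the windrose connection rather than by an ad-hoc rerouting-and-charging scheme whose details you yourself flag as the main obstacle. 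Your second alternative---guessing quadrant data in the style of Angelini et al.\ and certifying realizability combinatorially---is in fact much closer to the paper's actual route and would converge on essentially the same argument.

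In short, your plan is correct, but you have the emphasis inverted: the direct charging argument you foreground is the hard and unnecessary path, while the windrose detour you mention only in passing is the clean one that dissolves the obstacle.
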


We therefore consider the restricted case where the planar embedding of~$G$ is fixed, and the~$Y$-induced subgraph of~$G$ is a connected spanning subgraph of~$G$ whose upward embedding is fixed.
For this case we present a linear-time algorithm that determines whether~$G$ admits an~$xy$-planar drawing.
Our algorithm uses several structural insights.
First, in Section~\ref{sec:windrose-planarity}, we provide a new, simple combinatorial characterization of windrose planar embeddings.
From each~$xy$-planar drawing of~$G$ we can derive an embedded windrose planar graph~$G^+$.
Using our combinatorial characterization of windrose planar embeddings we can simplify~$G^+$.
In Section~\ref{sec:windrose-drawings-derived-from-xy-drawings} we show that in this simplified graph, every edge of the original graph~$G$ corresponds to one of four windrose planar gadgets.
To test~$G$ for~$xy$-planarity, we show in Section~\ref{sec:xy-planarity-testing} how to determine in linear time whether there exists a choice of one gadget for each edge of~$G$ that leads to a windrose planar embedding.
In the positive case our algorithm outputs an~$xy$-planar drawing on a polynomial-size grid where each edge has at most three bends.
If every windrose planar graph has a straight-line drawing (an open question), then each edge has at most one bend, which we show to be optimal.

\section{Preliminaries}
\label{sec:preliminaries}

We use standard terminology concerning (upward) planar graph drawings and embeddings.
Let~$G = (V, E)$ be a connected graph.
A \emph{drawing} of~$G$ maps each vertex to a point in the plane and each edge to a finite polygonal chain between its two endpoints.
A drawing is \emph{planar} if distinct edges do not intersect except in common endpoints.
An \emph{embedding} of~$G$ consists of a counter-clockwise cyclic order of edges incident to each vertex of~$G$.
A planar drawing of~$G$ induces an embedding of~$G$.
An embedding~$\mathcal E$ of~$G$ is \emph{planar} if there exists a planar drawing of~$G$ that induces~$\mathcal E$.
For an inner face (the outer face)~$f$ of~$\mathcal E$, define the \emph{boundary} of~$f$ as the clockwise (counter-clockwise) cyclic walk on the edges incident to~$f$.

Let~$G$ be a directed graph.
A drawing of~$G$ is \emph{upward} if each directed edge~$(u, v)$ is drawn as a connected series of strictly increasing~$y$-monotone line segments.
A drawing is \emph{upward planar} if it is both upward and planar.
An \emph{upward embedding} of~$G$ consists of left-to-right orders of incoming and outgoing edges incident to each vertex of~$G$.
An upward planar drawing of~$G$ induces an upward embedding of~$G$.
An upward embedding of~$G$ is \emph{planar} if it is induced by an upward planar drawing of~$G$.
An upward embedding induces an \emph{underlying embedding} of~$G$ by concatenating the left-to-right order of incoming edges and the reversed left-to-right order of outgoing edges into one counter-clockwise cyclic order around each vertex.
An embedding of~$G$ is \emph{bimodal} if the incoming (outgoing) edges appear consecutively around each vertex.
The underlying embedding of an upward embedding is bimodal.
A vertex of~$G$ is a \emph{sink} (\emph{source}) if it is incident only to incoming (outgoing) edges.
A vertex that is neither a source nor a sink of~$G$ is called an \emph{inner vertex}.
Consider an upward planar drawing of~$G$ and its underlying planar embedding~$\mathcal E$ of~$G$.
For three consecutive vertices~$u, v, w$ on the boundary of a face~$f$, the vertex~$v$ is called a \emph{face-source} (\emph{face-sink}) of~$f$ if the edges~$uv$ and~$vw$ are both directed away from (towards)~$v$.
Define~$n_f$ as the number of face-sources of~$f$, which equals the number of face-sinks of~$f$.
A \emph{sink/source assignment}~$\psi: v \mapsto (e, e')$ maps each source and sink~$v$ to two edges~$e, e'$ incident to~$v$ so that~$e$ immediately precedes~$e'$ in counter-clockwise cyclic order of edges incident to~$v$ defined by~$\mathcal E$.
This corresponds to a unique face~$f$ of~$\mathcal E$ such that~$e$ immediately precedes~$e'$ on the boundary of~$f$.
Thus, we say that~$\psi$ assigns~$v$ to~$f$.
The assignment is \emph{upward consistent} if~$\psi$ assigns~$n_f + 1$ vertices to one face, and~$n_f - 1$ vertices to all other faces.
The face to which~$n_f + 1$ vertices are assigned is the outer face.
From an embedding~$\mathcal E$ and a sink/source assignment~$\psi$ an upward embedding of~$G$ can be obtained by splitting for each sink (source)~$v$ the counter-clockwise cyclic order of edges incident to~$v$ defined by~$\mathcal E$ between the two edges~$e, e'$ with~$\psi(v) = (e, e')$ to obtain the right-to-left (left-to-right) order of incoming (outgoing) edges.
In the reverse direction, from an upward embedding of~$G$ a sink/source assignment~$\psi$ can be obtained as follows.
Assign each sink~$v$ to~$(e, e')$ where~$e$ and~$e'$ are the rightmost and leftmost incoming edge incident to~$v$, respectively.
Assign each source~$v$ to~$(e, e')$ where~$e$ and~$e'$ are the leftmost and rightmost outgoing edge incident to~$v$, respectively.
Thus, upward embeddings are equivalent to planar embeddings together with a sink/source assignment.
The following was observed by Bertolazzi et al.~\cite{DBLP:journals/algorithmica/BertolazziBLM94} for biconnected graphs, we provide a straight-forward extension to simply-connected graphs.

\begin{lemma}
    \label{lem:upward-planarity}
    Let~$G$ be a connected directed acyclic graph together with an embedding~$\mathcal E$.
    There exists an upward planar embedding of~$G$ whose underlying embedding is~$\mathcal E$ if and only if~$\mathcal E$ is planar and bimodal, and it admits an upward consistent assignment.
\end{lemma}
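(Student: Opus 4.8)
For the forward direction, suppose $G$ has an upward planar drawing $\Gamma$ whose induced upward embedding has underlying embedding $\mathcal E$. Then $\mathcal E$ is planar, being induced by the planar drawing $\Gamma$, and it is bimodal, being the underlying embedding of an upward embedding. To see that the sink/source assignment $\psi$ obtained from this upward embedding is upward consistent, I would count ``large'' corners: for a face $f$ and a face-source or face-sink $v$ of $f$, call the corner of $f$ at $v$ \emph{large} if its angle exceeds $\pi$ in $\Gamma$ (the two edges at such a corner both leave, or both enter, $v$ with directions in an open half-plane, so the angle is never exactly $\pi$ and this is well defined). An inner vertex of $G$ has no large corner, whereas a source or a sink $v$ of $G$ has exactly one large corner, namely in the face that $\psi$ assigns $v$ to; hence the number of vertices that $\psi$ assigns to $f$ equals the number of large corners of $f$. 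On the other hand, an inner face has $n_f-1$ large corners and the outer face has $n_f+1$; this is a local property of a single face in an upward planar drawing and so follows from the analysis of Bertolazzi et al.~\cite{DBLP:journals/algorithmica/BertolazziBLM94} with no connectivity assumption. Combining the two counts shows that $\psi$ is upward consistent and that the distinguished face is the outer face.

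For the converse, suppose $\mathcal E$ is planar and bimodal and admits an upward consistent assignment $\psi$. By the correspondence recalled in Section~\ref{sec:preliminaries}, the pair $(\mathcal E,\psi)$ determines an upward embedding of $G$, and it suffices to realize it by an upward planar drawing. If $G$ is biconnected this is exactly the statement of Bertolazzi et al.~\cite{DBLP:journals/algorithmica/BertolazziBLM94}, so the plan is to reduce the general connected case to the biconnected one by augmenting $G$ inside its faces. Concretely, while $G$ is not biconnected, pick a cut vertex $v$ that occurs more than once on the boundary of some face $f$ and insert into $f$ a small directed gadget --- a single directed edge, or a path of two edges through a fresh dummy vertex --- joining two corners of $f$ so as to decrease the number of blocks, choosing the orientation and the precise corners so that the augmented graph is still acyclic, the augmented embedding is still planar and bimodal, and $\psi$ extends to an upward consistent assignment of the augmented graph whose restriction to the original sinks and sources is again $\psi$. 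Iterating produces a biconnected $G'$ with data $(\mathcal E',\psi')$; applying~\cite{DBLP:journals/algorithmica/BertolazziBLM94} to $G'$ and then deleting all inserted gadgets yields an upward planar drawing of $G$ realizing $(\mathcal E,\psi)$. (An alternative is to draw each block on its own via~\cite{DBLP:journals/algorithmica/BertolazziBLM94} and glue the drawings along the block-cut tree, translating each block vertically so a shared cut vertex has the same $y$-coordinate in all of its blocks and shearing and scaling it --- operations that preserve upward planarity --- into a thin angular wedge around that vertex disjoint from the other blocks.)

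The routine parts are the forward direction and, in the converse, the appeal to the biconnected result and the final clean-up. The main obstacle is carrying out the augmentation while preserving all four properties at once: acyclicity forbids at most one orientation of any candidate gadget edge, bimodality restricts which corner of $f$ at each endpoint the gadget may use (an edge may be attached at a corner only if its being incoming or outgoing is compatible with the in/out block already present there), and the upward consistent assignment must be re-balanced across the two faces into which the gadget splits $f$; one must check that these constraints can always be satisfied simultaneously, which is where the extra freedom of a two-edge gadget through a dummy vertex --- which can be made a local source, a local sink, or a through-vertex --- is needed. In the gluing variant the corresponding difficulty is to redraw each block ``thinly'' at its attachment vertex --- possible since an upward planar drawing can be redrawn with all edges at a prescribed vertex confined to a half-plane, and since shears preserve upward planarity --- and to verify that the $y$-coordinates remain globally consistent over the block-cut tree.
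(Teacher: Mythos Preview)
Your proposal is correct. For the converse direction you take essentially the same route as the paper: augment $G$ at cut vertices with small gadgets so as to reduce the number of blocks, invoke Bertolazzi et al.\ on the resulting biconnected graph, and then delete the gadgets. The paper's concrete gadget is precisely the two-edge path through a dummy vertex that you anticipate: at a cut vertex $v$ with consecutive edges $e_1=\{v,w_1\}$ and $e_2=\{v,w_2\}$ whose other endpoints lie in different blocks, it inserts a fresh vertex $x$ adjacent to $w_1$ and $w_2$, with the orientations of $\{x,w_1\}$ and $\{x,w_2\}$ chosen by a three-way case split (whether $v$ is a sink, a source, or an inner vertex in the relevant face). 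This case split is exactly the ``main obstacle'' you flag; the paper dispatches it by picture rather than by an exhaustive verification, so your proposal is no less complete on this point.

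Where you genuinely diverge from the paper is the forward direction. You argue it directly by counting large angles in the given drawing and observing that the local count $n_f\pm 1$ per face needs no biconnectivity hypothesis. The paper instead proves the forward direction by the \emph{same} inductive augmentation: it inserts the gadget into the drawing $\Gamma$ itself (here the case split is decided by~$\Gamma$), applies induction to obtain an upward consistent assignment~$\psi'$ for the augmented graph, and then pulls $\psi'$ back to an assignment~$\psi$ for $G$ by the rule $\psi(v)=(e_2,e_1)$ iff $\psi'(x)=(\{x,w_2\},\{x,w_1\})$. Your direct argument is shorter and more transparent; the paper's approach has the virtue of reusing a single construction for both implications, so that the gadget bookkeeping (face-source counts unchanged between $f$ and $f'$) is done once.
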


\begin{proof}
    Bertolazzi et al.~have proven the statement for biconnected graphs~\cite{DBLP:journals/algorithmica/BertolazziBLM94}.
    We extend it to simply connected graphs by induction over the number of maximal biconnected components of~$G$.
    If there is one such component the result of Bertolazzi et al.~applies.
    If~$G$ consists of a single vertex or a single edge, the statement holds trivially.
    Let~$v$ be a cutvertex of~$G$.
    Then there exist two edges~$e_1 = \{v, w_1\}$ and~$e_2  = \{v, w_2\}$ incident to~$v$ such that~$e_1$ immediately follows~$e_2$ in the counter-clockwise order of edges around~$v$ and~$w_1$ and~$w_2$ belong to different maximal biconnected components of~$G$.

    Let~$\Gamma$ be an upward planar drawing of~$G$ with underlying embedding~$\mathcal E$.
    Let~$f$ be a face such that~$e_1$ and~$e_2$ appear consecutively on its boundary.
    Moving closely along~$e_1$ and~$e_2$ in~$f$, insert a new vertex~$x$ and connect it to~$w_1$ and~$w_2$ as shown in Figure~\ref{fig:weakly-connected-augmentation}.
    \begin{figure}
        \centering
        \includegraphics{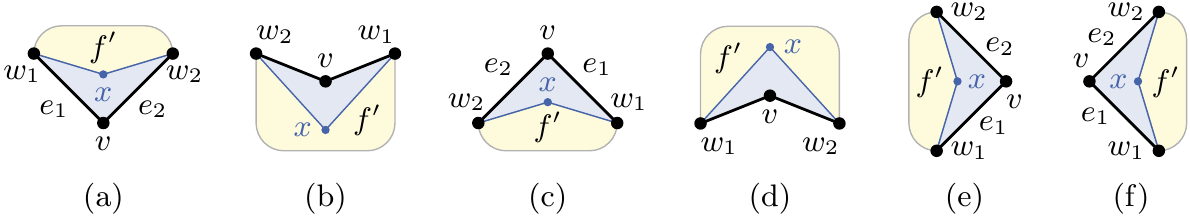}
        \caption{
           Reducing the size of the block-cutvertex tree by augmenting the graph given a cut-vertex~$v$ if~$v$ is (a) a sink, (b) a source, or (c) an inner vertex.
        }
        \label{fig:weakly-connected-augmentation}
    \end{figure}
    Note that this case distinction can be made on the basis of~$\Gamma$, although it could not be made solely on the basis of~$\mathcal E$.
    This separates~$f$ into one face of size four (shaded in blue), and another face~$f'$ (shaded in yellow).
    Let~$G'$ denote the graph obtained in this way, and let~$\Gamma'$ denote the drawing obtained in this way.
    In~$G'$ the vertices~$w_1$ and~$w_2$ belong to the same maximal biconnected component.
    Thus,~$G'$ has at least one less maximal biconnected component than~$G$.
    By induction, the underlying planar embedding~$\mathcal E'$ of~$\Gamma'$ then admits an upward consistent assignment~$\psi'$.
    By construction of the gadgets the number of face-sources and face-sinks of~$f$ equals the number of face-sources and face-sinks of~$f'$, respectively.
    To obtain an upward consistent assignment~$\psi$ of~$\mathcal E$ define~$\psi: v \mapsto (e_2, e_1)$ if and only if~$\psi': v \mapsto (\{x, w_2 \}, \{ x, w_1 \})$.

    Now assume that~$\mathcal E$ admits an upward consistent assignment~$\psi$.
    Repeat a similar argument as above.
    This time, the case distinction is made not based on~$\Gamma$, but based on~$\psi$, obtaining an upward consistent assignment~$\psi'$ of~$\mathcal E'$.
    By induction there exists an upward planar embedding~$\mathcal U'$ of~$G'$ whose underlying embedding is~$\mathcal E'$.
    Removing~$x$ from~$\mathcal U'$ gives an upward planar embedding~$\mathcal U$ of~$G$ whose underlying embedding is~$\mathcal E$.
\end{proof}

\section{Combinatorial View of Windrose Planarity}
\label{sec:windrose-planarity}

A \emph{windrose graph} is a directed graph~$G$ whose edges are labeled as either north-west (NW) or north-east (NE).
A \emph{windrose drawing} of~$G$ is an upward drawing of~$G$ where all NW (NE) edges decrease (increase) monotonically along the~$x$-axis.
In this way, the neighbors of each vertex are partitioned into the four quadrants of the plane around~$v$.
An upward planar embedding of~$G$ is \emph{windrose planar} if it is induced by a windrose planar drawing of~$G$.
Let~$\mathcal U$ be an upward planar embedding of~$G$ and let~$v$ be a vertex of~$G$.
We say~$\mathcal U$ is \emph{windrose consistent at~$v$} if
(i) the NW edges precede the NE edges in the left-to-right order of outgoing edges incident to~$v$, and
(ii) the NE edges precede the NW edges in the left-to-right order of incoming edges incident to~$v$.
We say that~$\mathcal U$ is \emph{windrose consistent} if it is windrose consistent at all vertices of~$G$ and show the following.

\begin{lemma}
    \label{lem:windrose-planar-embedding-consistent}
    Let~$G$ be a directed graph together with an upward planar embedding~$\mathcal U$.
    Then~$\mathcal U$ is a windrose planar embedding of~$G$ if and only if~$\mathcal U$ is windrose consistent.
\end{lemma}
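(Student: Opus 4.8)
The statement is an equivalence, and I would prove the two implications separately; the forward one is a routine observation about the geometry of windrose drawings, and essentially all of the work is in the converse.

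For ``windrose planar $\Rightarrow$ windrose consistent'', fix a windrose drawing realizing $\mathcal U$ and a vertex $v$. Every edge leaving $v$ goes into the open upper half-plane above $v$: an outgoing NW edge into the quadrant above-and-left of $v$, an outgoing NE edge into the quadrant above-and-right of $v$. Hence, reading the outgoing edges from left to right, all NW edges come before all NE edges, which is condition~(i). Symmetrically, every edge entering $v$ comes from the lower half-plane, an incoming NE edge from below-and-left and an incoming NW edge from below-and-right, so in the left-to-right order of incoming edges the NE edges precede the NW edges, which is condition~(ii). Thus $\mathcal U$ is windrose consistent at $v$, and $v$ was arbitrary.

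For ``windrose consistent $\Rightarrow$ windrose planar'' --- the substantive direction --- I first observe that $G$ is a connected directed acyclic graph (a directed cycle is impossible since $y$ strictly increases along it in any upward drawing; if $G$ is disconnected, handle the components side by side), so that Lemma~\ref{lem:upward-planarity} is available. The plan is to turn the combinatorial data of $\mathcal U$ into an actual windrose drawing by augmentation. Windrose consistency says that, around each vertex $v$, the rotation prescribed by $\mathcal U$ already has precisely the block structure a windrose drawing forces: reading counter-clockwise one meets the incoming NE edges, the incoming NW edges, the outgoing NE edges, and the outgoing NW edges, in this cyclic order, with the four transitions between blocks playing the roles of the south, east, north, and west directions at $v$. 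I would augment $G$ --- adding a frame in the outer face and, near each vertex, auxiliary vertices and edges that pin down this quadrant structure as face incidences --- so that the augmented graph $G^\star$ is bimodal and the sink/source assignment obtained from $\mathcal U$ is \emph{upward consistent}; by Lemma~\ref{lem:upward-planarity}, $G^\star$ then has an upward planar embedding and hence an upward planar drawing. The augmentation is designed so that in such a drawing every neighbor of $v$ lies in the quadrant prescribed by its NW/NE label; after deleting the auxiliary structure and re-routing each edge of $G$ as an $x$- and $y$-monotone polyline (a constant number of bends is fine, since the lemma only asserts the existence of some windrose drawing), one obtains a windrose drawing inducing $\mathcal U$.

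I expect the main obstacle to be the verification that the augmentation is upward planar: one must show that windrose consistency at every vertex is exactly the hypothesis making the induced sink/source assignment of $G^\star$ upward consistent, and that the auxiliary vertices encoding the cardinal directions end up on the intended faces. The bookkeeping --- translating between the four quadrants around a vertex, the left-to-right orders of its incoming and outgoing edges, and the faces of the underlying embedding --- is the delicate core; once it is in place, Lemma~\ref{lem:upward-planarity} together with a standard polyline upward-drawing construction finishes the proof.
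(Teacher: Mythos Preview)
Your forward direction matches the paper's. For the converse, however, the paper takes a quite different route: it argues by induction on the number of NW edges. The base case (no NW edges) is handled by taking any upward planar straight-line drawing, stretching it vertically so all slopes lie in $(\pi/4,3\pi/4)$, and then rotating so every edge becomes NE. For the inductive step, the paper builds an \emph{edge dependency graph} whose arcs record ``has greater slope than'' in any upward straight-line realization; this graph is acyclic, and windrose consistency guarantees no arc goes from an NE edge to an NW edge, so some NW edge $e$ has only NW predecessors and only NE successors. Relabelling $e$ as NE keeps $\mathcal U$ windrose consistent, induction yields a drawing, and a geometric shift --- cutting the plane along a $y$-monotone curve that hugs $e$ and sliding everything on one side horizontally --- turns $e$ back into an NW edge without disturbing the other labels.

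Your augmentation plan is in the spirit of Angelini et al.'s original treatment, but as stated it has a real gap. Lemma~\ref{lem:upward-planarity} hands you an upward planar drawing of $G^\star$, yet upward planarity constrains only $y$-monotonicity; nothing in that lemma says anything about $x$-coordinates. The sentence ``the augmentation is designed so that in such a drawing every neighbour of $v$ lies in the quadrant prescribed by its label'' is therefore carrying the entire weight of the converse, and you have neither specified the augmentation nor explained why \emph{every} upward drawing of it must respect the intended $x$-order. Calling this ``bookkeeping'' undersells it: encoding horizontal constraints through face incidences so that they survive an arbitrary upward realization is precisely the hard part, and filling it in requires machinery comparable to the original windrose-planarity paper. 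The paper's inductive argument avoids this entirely by never encoding $x$-constraints combinatorially --- it enforces them geometrically, one edge at a time.
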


\begin{proof}
    If~$\mathcal U$ is a windrose planar embedding then~$\mathcal U$ is windrose consistent.
    For the reverse direction, assume that~$\mathcal U$ is windrose consistent.
    We show that~$\mathcal U$ is windrose planar by induction over the number of NW edges in~$G$.
    Every upward planar graph admits a straight-line drawing~\cite{DBLP:journals/tcs/BattistaT88}.
    If~$G$ contains no NW edge, first vertically stretch such a straight-line drawing so that all slopes lie in the interval~$(\pi/4, 3\pi/4)$, then rotate the stretched drawing so that all slopes lie in the interval~$(0, \pi/2)$ to obtain a windrose planar drawing of~$G$.
    For the inductive step, the idea is to find an NW edge in~$G$ that can be relabeled as NE, where the result is a graph~$G'$ such that~$\mathcal U$ is a windrose consistent embedding of~$G'$.
    To this end, construct an \emph{edge dependency graph}~$D = (E, A)$ as follows.
    The edges of~$G$ are the nodes of~$D$.
    Construct the directed arcs of~$D$ as follows.
    For each vertex~$v$ of~$G$,
    consider the outgoing edges~$e^+_1, e^+_2, \dots, e^+_m$ in the left-to-right order prescribed by~$\mathcal U$.
    For~$1 \le i < m$ add the arc~$(e^+_i, e^+_{i + 1})$ to~$D$.
    Moreover, consider the incoming edges~$e_1^-, e_2^-, \dots, e_n^-$ in the left-to-right order prescribed by~$\mathcal U$.
    For~$1 \le i < n$ add the arc~$(e^-_{i + 1}, e^-_i)$ to~$D$.
    Consider an upward planar straight-line drawing~$\Gamma$ of~$G$.
    An arc~$(e, f)$ of the dependency graph~$D$ can be interpreted as the edge~$e$ having a greater slope than the edge~$f$ in~$\Gamma$.
    This directly implies that~$D$ is acyclic.
    The edge dependency graph~$D$ contains no arc from an NE edge to an NW edge by construction.
    Thus, if~$G$ contains an NW edge, there exists an NW edge~$e$ whose predecessors are all NW edges and whose successors in~$D$ are all NE edges.
    Let~$G'$ denote the graph obtained from~$G$ by relabeling~$e$ as NE.
    Observe that~$\mathcal U$ is a windrose consistent upward planar embedding of~$G'$.
    We show how to transform a windrose planar drawing~$\Gamma'$ of~$G'$ whose upward embedding is~$\mathcal U$ into a windrose planar drawing~$\Gamma$ of~$G$ whose upward embedding is~$\mathcal U$.
    We may assume that the vertices are in general position.
    Otherwise, this can be achieved by slightly perturbing the vertices in the drawing.
    See Figure~\ref{fig:windrose-drawing}.
    \begin{figure}
        \centering
        \includegraphics{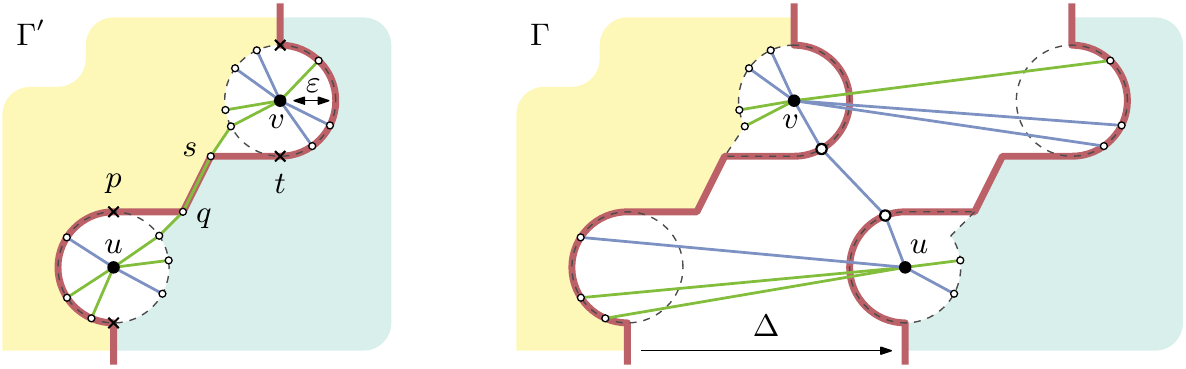}
        \caption{
            Transforming a windrose planar drawing~$\Gamma'$ of~$G'$ into a windrose planar drawing~$\Gamma$ of~$G$.
        }
        \label{fig:windrose-drawing}
    \end{figure}
    By convention, we draw NW edges in blue and NE edges in green.

    For~$\varepsilon > 0$ construct a~$y$-monotone curve~$c$ in~$\Gamma'$ as follows; see the red curve in Figure~\ref{fig:windrose-drawing}.
    The curve~$c$ extends vertically down infinitely from the point below~$u$ at distance~$\varepsilon$.
    It contains the left half-circle of radius~$\varepsilon$ centered at~$u$.
    It contains the horizontal segment from the highest point~$p$ of the half-circle up to the point~$q$ where it meets~$e$.
    Symmetrically,~$c$ extends vertically up infinitely from the point above~$v$ at distance~$\varepsilon$.
    It contains the right half-circle of radius~$\varepsilon$ centered at~$v$.
    It contains the horizontal segment from the lowest point of the half-circle~$t$ up to the point~$s$ where it meets~$e$.
    Finally,~$c$ follows~$e$ from~$q$ to~$s$.
    There exists an~$\varepsilon > 0$ so that
    (i)~$c$ intersects no vertex,
    (ii) the circles of radius~$\varepsilon$ around~$u$ and~$v$ intersect only line segments that have that vertex as an endpoint, and
    (iii) the segments from~$p$ to~$q$ and from~$s$ to~$t$ do not intersect any edge, except for~$e$.
    For property~(i), note that the vertices have been assumed to be in general position in~$\Gamma'$.
    For property~(ii), recall that edges are drawn as finite polygonal chains.
    For property~(iii), observe that all predecessors of~$e$ in~$D$ are NW edges.

    Shift the area to the right of~$c$ further right by an offset~$\Delta$ so that~$u$ lies to the right of~$v$.
    Then~$u$ and~$v$ can be connected by an NW edge.
    The segments between~$u$ and the intersection points on the circle of radius~$\varepsilon$ centered at~$u$ can be drawn as straight-line segments, preserving their monotonicity with respect to the~$x$-axis.
    The same holds true for the segments whose endpoint is~$v$.
    Finally,~$c$ may intersect edges on the vertical segments below~$u$ and above~$v$.
    Because of property~(i) shifting creates intermediate horizontal segments which can be made non-horizontal by slightly perturbing their endpoints.
\end{proof}

\section{From~$xy$-Drawings to Windrose Drawings}
\label{sec:windrose-drawings-derived-from-xy-drawings}

Let~$G = (V, E)$ be a directed graph together with sets~$X, Y$ such that~$X \cup Y = E$.
Define~$G|_Y$ as the subgraph of~$G$ induced by the edges in~$Y$.
Further, let~$\Gamma$ denote an~$xy$-drawing of~$G$.
Recall that edges are drawn as finite polygonal chains.
Define~$G^+$ as the graph obtained from~$G$ by subdividing the edges of~$G$ at each bend in~$\Gamma$ such that each directed edge~$(u, v)$ of~$G^+$ corresponds to an upward straight-line segment from~$u$ to~$v$ ($u$ is below~$v$) in~$\Gamma$.
Label~$(u, v)$ as NW (NE) if the corresponding segment decreases (increases) along the~$x$-axis.
Then~$G^+$ is a windrose graph together with a windrose planar drawing~$\Gamma^+$.
Let~$\mathcal E^+$ denote the windrose planar embedding induced by~$\Gamma^+$.

\subsection{Simplifying Windrose Planar Embeddings}

Each edge~$(u, v)$ of~$G$ in~$Y$ corresponds to a path~$(u = y_1, y_2, \dots, y_n = v)$ in~$G^+$.
For~$1 \le i < n$ the edge connecting~$y_i$ and~$y_{i + 1}$ is oriented from~$y_i$ to~$y_{i + 1}$ and is either an NE edge or an NW edge.
We can simplify~$G^+$and~$\mathcal E^+$.
Similarly-labeled edges~$(y_i, y_{i + 1}), (y_j, y_{j + 1})$ with~$1 \le i < j < n$ can be replaced by a single edge~$(y_i, y_{j + 1})$ with the same label.
We argue that the arising embedded graph is still windrose planar.
First, the order of labels around vertices~$y_i$ and~$y_{j+1}$ has not been changed, and the remaining vertices have not been affected, as a result the embedding is still windrose consistent. 
Second, we show that the embedding is still upward planar. 
This is directly implied by two facts. First, vertices~$y_{i+1}, \dots, y_{j}$ are neither sources nor sinks. And second, the sink/source assignment of~$y_{i}$ and~$y_{j+1}$ has not been changed.
Thereby the number of face-sources and face-sinks and the number of sources and sinks assigned to every face also stay the same.
Since~$\mathcal{E}^+$ is upward consistent, then the simplified embedding is upward consistent, too.
By Lemma~\ref{lem:windrose-planar-embedding-consistent}, the simplified embedding admits a windrose planar drawing (it is also an~$xy$-planar drawing of~$G$).
We can repeat this, until every edge~$(u, v) \in Y$ is represented with either a single edge, or with two edges with different labels.
In the following, we want that every gadget consists of exactly two edges.
For this purpose, if the path from~$u$ to~$v$ consists of a single edge, we subdivide it into two edges with the same label.
See Figure~\ref{fig:windrose-gadgets} for the four possible gadgets~$\mathcal H^y_1, \dots, \mathcal H^y_4$.
\begin{figure}
    \centering
    \includegraphics{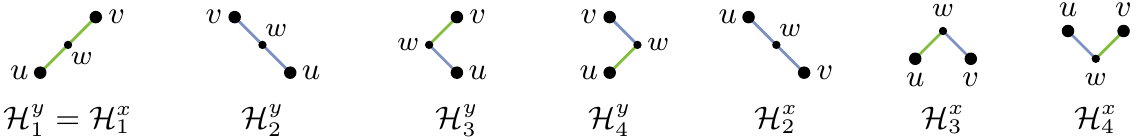}
    \caption{
        The gadgets~$\mathcal H^x_1, \dots, \mathcal H^x_4$ and~$\mathcal H^y_1, \dots, \mathcal H^y_4$ that are used to represent an~$x$-monotone edge and an~$y$-monotone edge~$(u, v)$, respectively.
        NW (NE) edges are drawn in blue (green).
    }
    \label{fig:windrose-gadgets}
\end{figure}

Each edge~$(u, v)$ of~$G$ in~$X$ corresponds to a path~$(u = x_1, x_2, \dots, x_n = v)$ in~$G^+$.
For~$1 \le i < n$ the edge connecting~$x_i$ and~$x_{i + 1}$ in~$G^+$ is either a NE edge oriented from~$x_i$ to~$x_{i + 1}$, or an NW edge oriented from~$x_{i + 1}$ to~$x_i$.
Again, we can simplify~$\mathcal E^+$.
For~$1 < i < n$ if the edges connecting~$x_{i - 1}$ and~$x_i$, and~$x_i$ and~$x_{i + 1}$ are oriented in the same direction they also are similarly labeled and the same argument as above can be used to replace~$x_i$ and its incident edges by an edge connecting~$x_{i - 1}$ and~$x_{i + 1}$.
Now consider the case that for~$1 \le i \le n - 3$ the edges~$e_1 = (x_i, x_{i + 1})$ and~$e_3 = (x_{i + 2}, x_{i + 3})$ are labeled as NE and the edge~$e_2 = (x_{i + 1}, x_{i + 2})$ is labeled as NW.
Because~$\Gamma$ is an~$xy$-drawing, the sink/source assignment induced by~$\mathcal E^+$ assigns~$x_{i + 1}$ to~$(e_2, e_1)$ and~$x_{i + 2}$ to~$(e_2, e_3)$ (in terms of upward planarity as defined in Section~\ref{sec:preliminaries}).
Replacing the vertices~$x_{i + 1}, x_{i + 2}$ and their incident edges by a single edge~$(x_i, x_{i + 3})$ labeled as NE reduces the number of face-sinks by one and it reduces the number of face-sources of both incident faces by one.
The number of assigned face-sinks and face-sources is also reduced by one.
Thereby, the sink/source assignment is still consistent and the embedding is upward planar.
Together with the arguments for edges of~$G$ in~$Y$ this shows that the simplified embedding remains a windrose planar embedding.
See Figure~\ref{fig:windrose-gadgets} for the four possible gadgets~$\mathcal H^x_1, \dots, \mathcal H^x_4$.

Let~$G^*$ and~$\mathcal E^*$ denote the simplified windrose graph and windrose planar embedding.
Every windrose planar drawing of~$G^*$ with embedding~$\mathcal E^*$ induces an~$xy$-planar drawing~$\Gamma'$ of~$G$ such that~$\Gamma$ and~$\Gamma'$ induce the same planar embedding of~$G$ and the same upward planar embedding of~$G|_Y$.
Note that~$G^*$ is obtained from~$G$ by replacing
(i) each edge of~$G$ in~$X \setminus Y$ with a gadget in~$\mathcal H^x$,
(ii) each edge of~$G$ in~$Y \setminus X$ with a gadget in~$\mathcal H^y$, and
(iii) each edge of~$G$ in~$X \cap Y$ with the (unique) gadget in~$\mathcal H^x \cap \mathcal H^y$
(where~$\mathcal H^x = \{\mathcal H_1^x, \mathcal H_2^x, \mathcal H_3^x, \mathcal H_4^x \}$ and~$\mathcal H^y = \{\mathcal H_1^y, \mathcal H_2^y, \mathcal H_3^y, \mathcal H_4^y \}$).
We say that a windrose graph obtained from~$G$ by such a gadget replacement is \emph{derived} from~$G$.
A windrose planar embedding of a graph derived from~$G$ induces a planar embedding of~$G$ and an upward planar embedding of~$G|_Y$.
We have shown the following.

\begin{lemma}
    \label{lem:xy-drawing-to-windrose-drawing-weak}
    Let~$G = (V, E)$ be a directed graph together with sets~$X, Y$ such that~$X \cup Y = E$.
    This graph admits an~$xy$-planar drawing with planar embedding~$\mathcal E$ of~$G$ and upward planar embedding~$\mathcal U$ of~$G|_Y$ if and only if
    there exists a derived graph~$G^*$ of~$G$ with a windrose planar embedding that induces~$\mathcal E$ and~$\mathcal U$.
\end{lemma}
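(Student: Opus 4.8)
The plan is to prove the two implications separately, in each case by carefully tracking which combinatorial data is preserved under the constructions set up earlier in this section.

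For the ``only if'' direction I would start from an $xy$-planar drawing~$\Gamma$ of~$G$ that induces the planar embedding~$\mathcal E$ on~$G$ and the upward planar embedding~$\mathcal U$ on~$G|_Y$, and pass to the subdivided windrose graph~$G^+$ with its windrose planar drawing~$\Gamma^+$ and induced windrose planar embedding~$\mathcal E^+$; since every subdivision vertex lies in the interior of an edge of~$G$ (and every edge of~$G|_Y$ is subdivided into a directed path, because it is drawn $y$-monotone from tail to head), $\mathcal E^+$ already induces~$\mathcal E$ on~$G$ and~$\mathcal U$ on~$G|_Y$. I would then repeatedly apply the simplification moves described above: merging two equally-labeled edges of a path representing a $Y$-edge (and the subpath between them) into a single edge, merging two consistently-oriented consecutive edges of a path representing an $X$-edge into one, and collapsing an alternating NE--NW--NE (or NW--NE--NW) triple of such a path into a single edge; after subdividing any path that has collapsed to a single edge, each path representing an edge of~$X$ (resp.\ of~$Y$) has become one of the gadgets~$\mathcal H^x_1, \dots, \mathcal H^x_4$ (resp.~$\mathcal H^y_1, \dots, \mathcal H^y_4$). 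The resulting graph~$G^*$ is thus derived from~$G$, and since each move leaves the rotation and the sink/source assignment at every surviving vertex untouched, its windrose planar embedding still induces~$\mathcal E$ on~$G$ and~$\mathcal U$ on~$G|_Y$.

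For the ``if'' direction, given a derived graph~$G^*$ with a windrose planar embedding~$\mathcal E^*$ inducing~$\mathcal E$ and~$\mathcal U$, I would invoke Lemma~\ref{lem:windrose-planar-embedding-consistent} to obtain a windrose planar drawing~$\Gamma^*$ realizing~$\mathcal E^*$, and then suppress all subdivision vertices of~$G^*$ to get a polyline drawing~$\Gamma'$ of~$G$; planarity is inherited from~$\Gamma^*$. It then suffices to check the gadgets one at a time: inspection of~$\mathcal H^x_1, \dots, \mathcal H^x_4$ shows that, traversed from~$u$ to~$v$, the $x$-coordinate strictly increases along the corresponding polyline, so each edge of~$X$ is $x$-monotone in~$\Gamma'$; and each of~$\mathcal H^y_1, \dots, \mathcal H^y_4$ is a directed $u$--$v$ path of the upward drawing~$\Gamma^*$, so each edge of~$Y$ is strictly $y$-monotone. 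Hence~$\Gamma'$ is $xy$-planar, and because the suppressed vertices are interior to edges of~$G$ (and, for edges of~$G|_Y$, the gadget is traversed in the edge's direction), $\Gamma'$ induces~$\mathcal E$ on~$G$ and~$\mathcal U$ on~$G|_Y$.

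The one step that needs genuine care — and where the work really sits — is the invariance claimed in the ``only if'' direction: that each simplification move keeps the embedding windrose planar. This is exactly where Lemma~\ref{lem:windrose-planar-embedding-consistent} and the sink/source bookkeeping from Section~\ref{sec:preliminaries} come in, since one must verify simultaneously that windrose consistency is preserved (the cyclic label-order around every surviving vertex is unchanged and only inner vertices are removed) and that upward consistency is preserved (the face-source and face-sink counts~$n_f$, and the numbers of sources and sinks assigned to each incident face, drop in lockstep). Everything else — building~$G^+$, suppressing subdivision vertices, and reading monotonicity off the gadgets — is routine.
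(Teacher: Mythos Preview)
Your proposal is correct and follows essentially the same approach as the paper: the paper's proof is the discussion preceding the lemma statement, which likewise builds~$G^+$ from~$\Gamma$, simplifies it to~$G^*$ via the same merge and collapse moves while arguing that windrose consistency and upward consistency are preserved (using Lemma~\ref{lem:windrose-planar-embedding-consistent} and the sink/source bookkeeping exactly as you do), and for the converse simply observes that any windrose planar drawing of~$G^*$ with embedding~$\mathcal E^*$ induces an~$xy$-planar drawing of~$G$ with the right embeddings. Your write-up is, if anything, more explicit about the two directions and about where the work lies than the paper's own exposition.
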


\subsection{Special Windrose Planar Embeddings}

Inspired by Lemma~\ref{lem:xy-drawing-to-windrose-drawing-weak}, part of the approach to test~$G$ for~$xy$-planarity will be to use Lemma~\ref{lem:upward-planarity} to test for every edge~$e \in X$ and each gadget~$\mathcal H_i^x \in \mathcal H^x$ whether replacing~$e$ with~$\mathcal H_i^x$ leads to an upward planar embedding of~$G|_Y + e$.
If all edges incident to~$e$ lie in the same quadrant, it is not right away possible to derive the upward planar embedding of~$G|_Y + e$ just from the upward planar embedding of~$G|_Y$ and the gadget choice~$\mathcal H_i^x$.
For an example, consider Figure~\ref{fig:special-windrose-drawing}~(a,~b).
\begin{figure}
    \centering
    \includegraphics{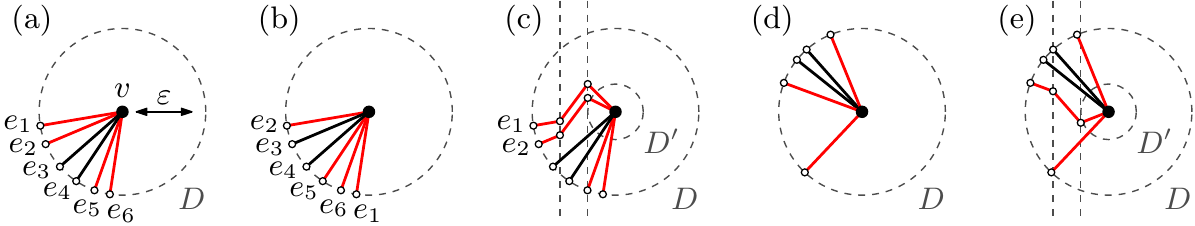}
    \caption{
        Two~$xy$-drawings that have the same cyclic order of edges around the boundary of~$D$ and assignment of line segments to quadrants, but distinct upward planar embeddings~(a,~b).
        Edges in~$X$ are drawn in red, edges in~$Y$ are drawn in black.
        The~$xy$-drawing~(a) can be locally modified to obtain a special~$xy$-drawing~(c) that admits no such ambiguities.
        Modifying a drawing where Property~(2) does not hold true works symmetrically~(d,~e).
    }
    \label{fig:special-windrose-drawing}
\end{figure}
Even though~$e_1$ might be replaced by the same gadget the sink assignment of~$G|_Y + e_1$ in~(a) is~$\psi: v \mapsto (e_4, e_1)$, whereas in~(b) it is~$\psi: v \mapsto (e_1, e_3)$.
To prevent such ambiguities, we introduce the notion of special~$xy$-drawings.

In any~$xy$-planar drawing~$\Gamma$ of~$G$ there exists some~$\varepsilon > 0$ so that the disk~$D$ of radius~$\varepsilon$ centered at~$v$
does not contain any vertex other than~$v$,
intersects only edges incident to~$v$,
and does not contain any point where an edge bends.
A counter-clockwise traversal of the boundary of~$D$ gives four (possibly empty) linear orders of the edges in each quadrant.
We say that~$\Gamma$ is \emph{special} if for each vertex~$v$ of~$G$ the following four properties hold true.
(1) If~$v$ has only incoming edges in~$X$ and incoming edges in~$Y$, then the first edge in the southwestern quadrant is in~$Y$.
(2) If~$v$ has only incoming edges in~$X$ and outgoing edges in~$Y$, then the last edge in the northwestern quadrant is in~$Y$.
(3) If~$v$ has only outgoing edges in~$X$ and incoming edges in~$Y$, then the last edge in the southeastern quadrant is in~$Y$.
(4) If~$v$ has only outgoing edges in~$X$ and outgoing edges in~$Y$, then the first edge in the northeastern quadrant is in~$Y$.

If the drawing~$\Gamma$ is not special, then we can modify it locally around each vertex where one of the four properties does not hold to obtain a special drawing.
Consider the case that~$v$ is a vertex of~$G$ that has only incoming edges in~$X$ and~$Y$ but the first edge in the southwestern quadrant is not in~$Y$, i.e., Property~(1) does not hold true for~$v$.
See Figure~\ref{fig:special-windrose-drawing}~(a).
Introduce two new bends on each edge~$e_1, e_2, \dots, e_n$ in~$X$ that precede the first edge in~$Y$ in the southwestern quadrant so that the line segments incident to~$v$ lie in the northwestern quadrant.
This preserves the~$x$-monotonicity of the drawing and ensures Property~(1).
See Figure~\ref{fig:special-windrose-drawing}~(c) for the modified drawing, where Property~(1) holds true (consider the smaller disk~$D'$).
The other three properties can be ensured symmetrically; see Figure~\ref{fig:special-windrose-drawing}~(d,~e) for an example of how to ensure Property~(2).

Note how in the special windrose drawings in Figure~\ref{fig:special-windrose-drawing}~(c,~e) for each red edge~$e$ the upward planar embedding of~$G|_Y + e$ can be derived just from the upward planar embedding of~$G|_Y$ and the gadget choice~$\mathcal H_i^x$ for~$e$.
A windrose planar embedding is \emph{special} if it is derived from a special~$xy$-planar drawing.
Observe that simplifying the windrose planar embedding as explained in the previous section does not alter edges incident to non-subdivision vertices.
Therefore, Lemma~\ref{lem:xy-drawing-to-windrose-drawing-weak} can be strengthened to special windrose planar embeddings as follows.

\begin{lemma}
    \label{lem:xy-drawing-to-windrose-drawing}
    Let~$G = (V, E)$ be a directed graph together with sets~$X, Y$ such that~$X \cup Y = E$.
    This graph admits an~$xy$-planar drawing with planar embedding~$\mathcal E$ of~$G$ and upward planar embedding~$\mathcal U$ of~$G|_Y$ if and only if
    there exists a derived graph~$G^*$ of~$G$ together with a special windrose planar embedding that induces~$\mathcal E$ and~$\mathcal U$.
\end{lemma}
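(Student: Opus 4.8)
The plan is to reduce everything to Lemma~\ref{lem:xy-drawing-to-windrose-drawing-weak} plus a single new ingredient: the ability to turn an arbitrary $xy$-planar drawing into a special one without changing the relevant embeddings. The backward direction will need nothing new, since a special windrose planar embedding is in particular a windrose planar embedding; so if a derived graph $G^*$ carries a special windrose planar embedding inducing $\mathcal E$ and $\mathcal U$, Lemma~\ref{lem:xy-drawing-to-windrose-drawing-weak} will immediately yield an $xy$-planar drawing of $G$ with planar embedding $\mathcal E$ of $G$ and upward planar embedding $\mathcal U$ of $G|_Y$. All the work will be in the forward direction.

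For the forward direction I would start from an $xy$-planar drawing $\Gamma$ of $G$ inducing $\mathcal E$ on $G$ and $\mathcal U$ on $G|_Y$, and apply the local surgery sketched before the lemma, vertex by vertex, to make it special. I would first observe that the hypotheses of Properties~(1)--(4) are pairwise mutually exclusive --- they prescribe incompatible combinations of ``all $X$-edges at $v$ incoming versus outgoing'' and ``all $Y$-edges at $v$ incoming versus outgoing'' --- so at most one property can be violated at any vertex, and that the repair at a vertex $v$ lives entirely inside the disk $D$ around $v$ from the definition of ``special'' (chosen small enough to be disjoint from the corresponding disks at other vertices). Hence the repairs at distinct vertices do not interfere, and I can treat one vertex at a time.

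The heart of the argument will be to check that one such repair --- say the one fixing Property~(1), the rest being symmetric --- preserves $\mathcal E$ and $\mathcal U$. Here I would use that the repair only adds bends to the offending $X$-edges $e_1,\dots,e_n$ and that, inside $D$, it merely slides the contiguous block formed by the segments of $e_1,\dots,e_n$ incident to $v$ from the beginning of the southwestern quadrant, across the negative $x$-axis, to the end of the northwestern quadrant, keeping their order. Since the negative $x$-axis is exactly the boundary between these two quadrants in the counter-clockwise cyclic order around $v$, this slide does not change that cyclic order; no other vertex is affected, so $\mathcal E$ is unchanged. I would then note that the $x$-coordinate along each $e_i$ can be kept monotone (an incoming $X$-edge ending at $v$ approaches from the left and can be rerouted to approach from the upper left while still increasing in $x$), that $Y$-edges are untouched, and that the new bends can be placed close enough to $v$ and to the axis to create no crossings; so the drawing stays $xy$-planar, still induces $\mathcal E$, and --- because $G|_Y$ is drawn exactly as before --- still induces $\mathcal U$. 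The other three repairs carry a block of $X$-edge-ends across the positive or negative $x$-axis into the adjacent quadrant in the same way. After repairing every offending vertex I obtain a special $xy$-planar drawing $\Gamma_s$ inducing $\mathcal E$ and $\mathcal U$.

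Finally I would run the construction of Section~\ref{sec:windrose-drawings-derived-from-xy-drawings} on $\Gamma_s$: subdividing at bends produces a windrose graph, and the simplification from the preceding subsection turns it into a derived graph $G^*$ with a windrose planar embedding $\mathcal E^*$. The analysis already used for Lemma~\ref{lem:xy-drawing-to-windrose-drawing-weak} shows that $\mathcal E^*$ induces exactly the planar embedding of $G$ and the upward planar embedding of $G|_Y$ of the drawing it was built from, namely $\mathcal E$ and $\mathcal U$; and since $\mathcal E^*$ is derived from the special $xy$-planar drawing $\Gamma_s$, it is special by definition. This provides the required derived graph. I expect the only genuinely delicate step to be the cyclic-order check in the third paragraph: the repair works precisely because it is placed on the boundary between two adjacent quadrants, so the block of edge-ends crosses a coordinate half-axis without ever being reordered against the other edges at $v$, and because the block stays contiguous it can be swept across the axis without self-crossings.
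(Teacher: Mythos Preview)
Your proposal is correct and follows essentially the same approach as the paper: reduce the backward direction to Lemma~\ref{lem:xy-drawing-to-windrose-drawing-weak}, and for the forward direction first make the given $xy$-planar drawing special via the local quadrant-sliding surgery, then run the subdivision-and-simplification procedure to obtain a derived graph whose windrose planar embedding is special by definition. The paper's justification is in fact only the short paragraph preceding the lemma, together with the observation that simplification does not alter edges incident to non-subdivision vertices; you supply considerably more detail, in particular the careful check that the repair at $v$ slides a contiguous block of $X$-edge ends across a coordinate half-axis into the adjacent quadrant and hence leaves the cyclic order---and thus $\mathcal E$ and $\mathcal U$---unchanged.
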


Because every windrose planar graph admits a polyline drawing in polynomial area with at most one bend per edge~\cite{DBLP:journals/talg/AngeliniLBDKRR18} we immediately obtain the following.

\begin{corollary}
    Every~$xy$-planar graph admits a polyline drawing in polynomial area with at most three bends per edge.
\end{corollary}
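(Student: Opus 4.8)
The plan is to combine the forward direction of Lemma~\ref{lem:xy-drawing-to-windrose-drawing-weak} with the cited one-bend drawing result for windrose planar graphs. So let $G=(V,E)$ with $X\cup Y=E$ be $xy$-planar; I would fix any $xy$-planar drawing of $G$ and let $\mathcal E$ be the planar embedding of $G$ and $\mathcal U$ the upward planar embedding of $G|_Y$ it induces. By Lemma~\ref{lem:xy-drawing-to-windrose-drawing-weak} there is a graph $G^{*}$ derived from $G$ together with a windrose planar embedding $\mathcal E^{*}$ inducing $\mathcal E$ and $\mathcal U$. Recall from Section~\ref{sec:windrose-drawings-derived-from-xy-drawings} that $G^{*}$ arises from $G$ by replacing every edge by one of the gadgets $\mathcal H^{x}_{1},\dots,\mathcal H^{x}_{4},\mathcal H^{y}_{1},\dots,\mathcal H^{y}_{4}$, each of which is a directed path of exactly two edges through a single subdivision vertex; in particular $|V(G^{*})|=|V(G)|+|E(G)|$ and $|E(G^{*})|=2|E(G)|$, so $G^{*}$ has size linear in that of $G$.

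Next I would invoke the theorem of Angelini et al.~\cite{DBLP:journals/talg/AngeliniLBDKRR18}: since $\mathcal E^{*}$ is windrose planar, $G^{*}$ admits a windrose planar polyline drawing $\Gamma^{*}$ with at most one bend per edge, on a grid of size polynomial in $|G^{*}|$ and hence polynomial in $|G|$. I then read off a drawing $\Gamma$ of $G$ by suppressing the subdivision vertices: each edge $(u,v)$ of $G$ is drawn as the concatenation of the two polyline curves that $\Gamma^{*}$ assigns to the two edges of its gadget, with the former subdivision vertex turned into a bend. This polygonal chain has at most $1+1$ bends contributed by the two gadget edges, plus the one bend at the subdivision vertex, i.e.\ at most three bends; $\Gamma$ uses the same grid as $\Gamma^{*}$, so it has polynomial area; and $\Gamma$ is planar because $\Gamma^{*}$ is and suppressing degree-two vertices introduces no crossings.

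It then remains to verify that $\Gamma$ is an $xy$-drawing. For an edge in $Y$, both edges of its gadget are directed forward along the gadget path, so in the upward drawing $\Gamma^{*}$ the $y$-coordinate strictly increases along each of them and hence along their concatenation from $u$ to $v$; thus the edge is strictly $y$-monotone. For an edge in $X$, a glance at the four gadgets $\mathcal H^{x}_{1},\dots,\mathcal H^{x}_{4}$ shows that, when the gadget path is traversed from $u$ to $v$, each of its two edges is drawn with strictly increasing $x$-coordinate (an NE edge being traversed forwards and an NW edge backwards), so the edge is strictly $x$-monotone; an edge in $X\cap Y$ uses the gadget in $\mathcal H^{x}\cap\mathcal H^{y}$, whose two edges are both NE, and is therefore monotone in both coordinates. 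Hence $\Gamma$ is an $xy$-planar polyline drawing of $G$ of polynomial area with at most three bends per edge.

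I do not expect a real obstacle here — this is why the paper bills it as immediate — but two points need care. First, one must confirm from the construction in Section~\ref{sec:windrose-drawings-derived-from-xy-drawings} that every gadget really consists of exactly two edges, since a third edge in any gadget would push the bend count above three. Second, the monotonicity bookkeeping for the $X$-gadgets requires tracking the orientation of each gadget edge (NE versus NW) against the direction in which the resulting $x$-monotone curve is traversed; this is where the relabeling convention for the $\mathcal H^{x}$ gadgets must be read off correctly.
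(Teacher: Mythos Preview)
Your proposal is correct and follows the same reasoning the paper uses: the paper simply states that the corollary follows immediately from Lemma~\ref{lem:xy-drawing-to-windrose-drawing} together with the one-bend windrose drawing result of Angelini et al.~\cite{DBLP:journals/talg/AngeliniLBDKRR18}, and your argument is a careful unpacking of exactly that implication. Your additional checks (that each gadget has exactly two edges, that $|G^{*}|$ is linear in $|G|$, and that the concatenated curve is $x$- respectively $y$-monotone via the NE/NW orientation conventions for the $\mathcal H^{x}$ gadgets) are all correct and make explicit what the paper leaves to the reader.
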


If every embedded windrose planar graph admitted a straight-line drawing, then every~$xy$-planar graph would admit a polyline drawing with at most one bend per edge.
Not every~$xy$-planar graph admits an~$xy$-planar straight-line drawing; see Figure~\ref{fig:xy-requires-one-bend}.
\begin{figure}
    \centering
    \includegraphics{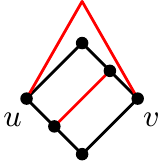}
    \caption{
        An~$xy$-planar graph that does not admit an~$xy$-planar straight-line drawing.
    }
    \label{fig:xy-requires-one-bend}
\end{figure}
Lemma~\ref{lem:xy-drawing-to-windrose-drawing} also motivates our approach of testing~$G$ for~$xy$-planarity by testing whether there exists a replacement of the edges of~$G$ with gadgets that respects the given planar and upward planar embeddings and yields a special windrose planar embedding.

\section{An~$xy$-Planarity Testing Algorithm}
\label{sec:xy-planarity-testing}

Let~$G$ be a directed graph together with edge sets~$X, Y$ and let~$e$ be an edge in~$X$.
Define the \emph{gadget candidate set}~$\mathcal H(e)$ as the subset of~$\mathcal H^x = \{\mathcal H_1^x, \mathcal H_2^x, \mathcal H_3^x, \mathcal H_4^x \}$ that contains each~$\mathcal H_i^x \in \mathcal H^x$ so that the embedding~$\mathcal U + \mathcal H_i^x$ is an upward planar embedding of~$G|_Y + e$.
Recall that Lemma~\ref{lem:xy-drawing-to-windrose-drawing} justifies that we can limit our considerations to windrose planar embeddings that are special, which lets us unambiguously derive the upward planar embedding~$\mathcal U + \mathcal H_i^x$ of~$G|_Y + e$ from the fixed upward planar embedding~$\mathcal U$ of~$G|_Y$ and the gadget choice~$\mathcal H_i^x$ for~$e$.
This is needed to test for upward planarity using Lemma~\ref{lem:upward-planarity}.
For~$e \in X$ the gadget candidate set~$\mathcal H(e)$ can be computed by tentatively replacing~$e$ with each~$\mathcal H_i^x$ and then running the upward planarity test for fixed upward embeddings.
In fact, this can even be done in overall linear time for all edges in~$X$.
To this end, choose for each face~$f$ of~$\mathcal U$ some vertex~$v$ and let~$v_1, v_2, \dots, v_k$ denote the boundary of~$f$.
Traverse the boundary of~$f$ once to compute for each~$1 \le i \le k$ the number of face-sources on the path~$v_1, v_2, \dots, v_i$, and the number of sources and sinks of~$G$ on the path~$v_1, v_2, \dots, v_i$ assigned to~$f$.
This is possible in linear time for all faces of~$\mathcal U$.
Now consider the insertion of some edge~$(v_a, v_b) \in X$ into~$f$, splitting~$f$ into two faces~$f_1, f_2$.
Using the previously calculated values, it can be checked in constant time whether~$f_1$ and~$f_2$ are upward consistent in the sense of Lemma~\ref{lem:upward-planarity}.
Together with the previous argument we conclude the following.

\begin{lemma}
    \label{lem:gadget-candidate-set-linear-time}
    The gadget candidate sets for all edges in~$X$ can be computed in linear time.
\end{lemma}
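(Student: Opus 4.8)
The plan is to reduce the statement to a constant-time test per edge--gadget pair on top of a linear-time preprocessing of~$\mathcal U$. Fix the upward planar embedding~$\mathcal U$ of~$G|_Y$, let~$e = (u, v) \in X$, and let~$\mathcal H_i^x \in \mathcal H^x$. By Lemma~\ref{lem:xy-drawing-to-windrose-drawing} we may restrict attention to special embeddings, and the specialness requirements pin down the angular positions of the two gadget edges at~$u$ and at~$v$; if these two positions do not lie on a common face of~$\mathcal U$, then~$\mathcal H_i^x \notin \mathcal H(e)$, and otherwise let~$f$ be that common face. The candidate upward embedding~$\mathcal U + \mathcal H_i^x$ of~$G|_Y + e$ is then well defined. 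It is planar, since the two-edge path of~$\mathcal H_i^x$ is routed inside~$f$, and its underlying embedding is bimodal, since~$\mathcal U + \mathcal H_i^x$ is by construction an upward embedding. By Lemma~\ref{lem:upward-planarity} it therefore remains only to decide whether~$\mathcal U + \mathcal H_i^x$ admits an upward consistent sink/source assignment; this is the only nontrivial part.

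The point is that this test is local. Inserting~$\mathcal H_i^x$ splits~$f$ into two faces~$f_1, f_2$ and leaves every other face of~$\mathcal U$ untouched, so the number of its face-sources and the number of sources and sinks of~$G|_Y$ assigned to it are unchanged. Recall that upward consistency means that one face (the outer face) is assigned one more vertex than its number of face-sources and every other face is assigned one fewer. Since~$\mathcal U$ is already upward consistent, every unchanged face still satisfies this, so~$\mathcal U + \mathcal H_i^x$ is upward consistent if and only if: when the outer face of~$\mathcal U$ is not~$f$, both~$f_1$ and~$f_2$ are assigned one fewer vertex than their respective numbers of face-sources; and when the outer face is~$f$, one of~$f_1, f_2$ is assigned one more and the other one fewer. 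In both cases we only need the numbers of face-sources of~$f_1$ and of~$f_2$ and the numbers of (re)assigned vertices in~$f_1$ and in~$f_2$.

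To compute these numbers in constant time we precompute, for each face~$f$ of~$\mathcal U$, prefix sums along its boundary walk~$v_1, \dots, v_k$: the number of face-sources among~$v_1, \dots, v_j$, and the number of sources and sinks of~$G|_Y$ among~$v_1, \dots, v_j$ that~$\mathcal U$ assigns to~$f$. This is one traversal per face, and since each edge lies on the boundary of two faces the total cost is~$O(n)$. When~$\mathcal H_i^x$ is inserted, the boundary of~$f$ is cut at the relevant occurrences of~$u$ and~$v$ into two arcs, which together with the new length-two path form the boundaries of~$f_1$ and~$f_2$; the prefix sums yield the face-source count and the assigned count on each arc in~$O(1)$, and to these we add~$O(1)$ local corrections that account for the face-source/face-sink status of~$u$, $v$ and of the subdivision vertex~$m$ with respect to~$f_1$ and~$f_2$ and for the placement of their sink/source assignments (in two of the four gadgets~$m$ is an inner vertex and contributes nothing). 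The balance check of the previous paragraph then runs in~$O(1)$. Doing this for the at most four gadgets of each of the~$O(|X|)$ edges of~$X$ costs~$O(|X|)$ additional time, so the algorithm is linear overall.

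I expect the main obstacle to be the endpoint bookkeeping: working out precisely, for each of the four~$x$-gadgets, how~$u$ and~$v$ change roles among face-source, face-sink, and inner vertex relative to~$f_1$ versus~$f_2$, and how their sink/source assignments are redirected --- in particular in the configuration where all of~$u$'s (or~$v$'s) incident~$G|_Y$-edges already lie in a single quadrant, which is exactly the ambiguity that Lemma~\ref{lem:xy-drawing-to-windrose-drawing} and the notion of special drawing were introduced to eliminate (cf.\ Figure~\ref{fig:special-windrose-drawing}). Once this finite case distinction is settled, the locality argument together with the prefix sums makes the linear-time bound immediate. One should also record that~$\mathcal U$ is preprocessed only once and that the tentative gadget replacements are evaluated against the stored data rather than materialized, so no per-edge recomputation occurs.
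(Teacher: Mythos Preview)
Your proposal is correct and follows essentially the same approach as the paper: precompute, for each face of~$\mathcal U$, prefix sums of face-sources and of assigned sources/sinks along the boundary walk, and then use these to test upward consistency of the two faces~$f_1,f_2$ created by a tentative gadget insertion in constant time via Lemma~\ref{lem:upward-planarity}. The paper's argument (given in the paragraph preceding the lemma) is terser and omits the endpoint bookkeeping you flag, but the idea is the same.
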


\subsection{Finding a Windrose Planar Derived Graph}

For every edge~$e$ of~$G$, add the variables~$e^{\NE}, e^{\NW}, e^{\SW}, e^{\SE}$ together with a clause~$\lnot x \lor \lnot y$ for each pair~$x, y$ of distinct variables.
This means that every edge of~$G^*$ is assigned to at most one quadrant.
Make sure that every edge of~$G^*$ is assigned to at least one quadrant as follows.
Let~$(u, v)$ be an edge of~$G$ and let~$w$ denote the vertex of the gadget in~$G^*$ that replaces~$(u, v)$.
If~$(u, v) \in X$, then add the clauses~$(u, w)^{\NE} \lor (u, w)^{\SE}$ so that~$(u, v)$ exits~$u$ in the east, and~$(w, v)^{\NW} \lor (w, v)^{\SW}$ so that~$(u, v)$ enters~$v$ from the west.
Next, if~$(u, v) \in Y$, then add the clauses~$(u, w)^{\NE} \lor (u, w)^{\NW}$ so that~$(u, v)$ exits~$u$ in the north, and~$(w, v)^{\SW} \lor (w, v)^{\SE}$ so that~$(u, v)$ enters~$v$ from the south.

Placing the edges of~$G^*$ into quadrants induces a unique gadget that replaces each edge of~$G$.
Let~$e = (u, v)$ be an edge of~$G$.
Each gadget~$\mathcal H^x_i \not\in \mathcal H(e)$ places the edge~$(u, w)$ in quadrant~$p$ and the edge~$(w, v)$ in quadrant~$q$.
Include the clause~$\lnot ( (u, w)^p \land (w, v)^q ) = \lnot (u, w)^p \lor \lnot (w, v)^q$, this would prevent the gadget~$\mathcal H^x_i$ from being induced.
Because no other gadget places the edge~$(u, w)$ in quadrant~$p$ and~$(w, v)$ in~$q$ adding this clause does not prevent any other gadget from being induced.
Add such a clause for each gadget~$\mathcal H^x_i \not\in \mathcal H(e)$.

The last step is to ensure windrose consistency at each vertex~$v$ of~$G^*$.
Use the combinatorial criterion from Lemma~\ref{lem:windrose-planar-embedding-consistent}.
Since we consider gadgets with a prescribed assignment at~$w$, we implicitly ensure that the embedding is windrose consistent around such vertices (see Figure~\ref{fig:windrose-gadgets}).
Let~$v$ be a non-subdivision vertex of~$G^*$, i.e.,~$v$ is also a vertex of~$G$.
Consider the case that~$v$ has incoming and outgoing edges in~$Y$.
Let~$\sigma = e_1, e_2, \dots, e_i, \dots, e_n, e_1$ denote the counter-clockwise cyclic order of edges incident to~$v$ in~$G^*$ such that~$e_1$ is a subdivision edge of an outgoing edge in~$Y$ and~$e_i$ is a subdivision edge of an incoming edge in~$Y$.
To achieve windrose consistency the edges must be labeled as NE, then NW, then SW and finally SE in the sequence~$e_1, \dots, e_i$.
This can be ensured with the constraints
\begin{align*}
e_j^{\NW} &\implies \lnot e_{j + 1}^{\NE} &                           &                               &                           & \\
e_j^{\SW} &\implies \lnot e_{j + 1}^{\NW} &\text{and}\quad  e_j^{\SW} &\implies \lnot e_{j + 1}^{\NE} &                           & \\
e_j^{\SE} &\implies \lnot e_{j + 1}^{\SW} &\text{and}\quad  e_j^{\SW} &\implies \lnot e_{j + 1}^{\NW} &\text{and}\quad  e_j^{\SW} &\implies \lnot e_{j + 1}^{\NE}
\end{align*}
for~$1 \le j < i$; see Figure~\ref{fig:2-sat-constraints}~(a).
\begin{figure}
    \centering
    \includegraphics{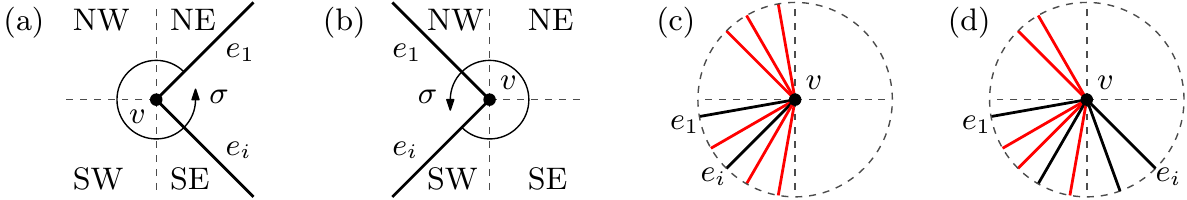}
    \caption{
        The situation around a vertex~$v$ that has both an outgoing edge~$e_1$ and an incoming edge~$e_i$ in~$Y$~(a,~b).
        An example of a vertex~$v$ that has only incoming edges in~$X$ and~$Y$~(c,~d).
        Here~$e_1$ and~$e_i$ are the leftmost and rightmost edges with respect to the fixed upward embedding.
    }
    \label{fig:2-sat-constraints}
\end{figure}
Similarly, in~$e_i, \dots, e_n, e_1$ edges must be labeled as SW, then SE, then NE and finally NW; see Figure~\ref{fig:2-sat-constraints}~(b).
A symmetric argument holds for vertices of~$G$ that have both incoming and outgoing edges in~$X$.

The remaining case consists of four subcases where~$v$ has only incoming or only outgoing edges in~$Y$, and only incoming or only outgoing edges in~$X$.
Consider the case that~$v$ has only incoming edges in~$Y$ and only incoming edges in~$X$ (the other cases are symmetric); see Figure~\ref{fig:2-sat-constraints}~(c,~d).
Let~$\sigma = e_1, e_2, \dots, e_i, \dots, e_n, e_1$ be the counter-clockwise cyclic order of edges incident to~$v$ in~$G^*$ such that
$e_1$ and~$e_i$ are subdivision edges of the leftmost and rightmost incoming edges in~$Y$ in the fixed upward planar embedding of~$G|_Y$.
Add the constraints~$e_j^{\SE} \lor e_j^{\SW}$ and~$e_j^{\SE} \implies e_{j + 1}^{\SE}$ for~$1 \le j < i$,
the constraint~$e_i^{\SE} \implies e_{i + 1}^{\NW}$,
and, because we seek special embeddings, the constraints~$e_j^{\NW} \implies e_{j + 1}^{\NW}$ for~$i < j < n$.

\subsection{Correctness}

Every windrose graph derived from~$G$ induces a solution of the \textsc{2-Sat} instance.
Every solution of the \textsc{2-Sat} instance induces a windrose graph~$G^*$ derived from~$G$ together with a windrose planar embedding.
For each edge~$e \in X$ a solution to the \textsc{2-Sat} instance induces a replacement gadget~$\mathcal H_i^x$ so that~$\mathcal U + \mathcal H_i^x$ is an upward planar embedding of~$G|_Y + e$.
The final component to our~$xy$-planarity test is to show that even though we tested the gadget candidates individually, the fact that the \textsc{2-Sat} instance is satisfiable implies that inserting for each edge in~$X$ its replacement gadget leads to an upward planar embedding of~$G^*$.

\begin{lemma}
    \label{lem:simultaneous-x-edge-insertion}
    Let~$G = (V, E)$ be a directed graph together with sets~$X, Y$ such that~$X \cup Y = E$.
    Let~$\mathcal{E}$ be an embedding of~$G$ and let~$\mathcal{U}$ be an upward planar embedding of~$G|_{Y}$.
    If the corresponding \textsc{2-Sat} instance is satisfiable, then the embedding~$\mathcal{U}^*$ of the graph~$G^*$ induced by a solution of this instance is upward planar.
\end{lemma}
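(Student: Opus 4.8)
The plan is to verify the three conditions of Lemma~\ref{lem:upward-planarity} for $\mathcal{U}^{*}$: that its underlying embedding $\mathcal{E}^{*}$ is planar, that it is bimodal, and that the sink/source assignment $\psi^{*}$ it induces is upward consistent. Planarity is immediate, since $G^{*}$ is obtained from $G$ by replacing every edge with a two-edge path (i.e.\ by subdivision) and $\mathcal{E}$ is planar. Bimodality holds trivially at the degree-two subdivision vertices, and at every other vertex $v$ it follows from the windrose-consistency clauses of the satisfied \textsc{2-Sat} instance: these clauses force the quadrants occupied around $v$ by the edges of $G^{*}$ incident to $v$ to appear in the cyclic order $\NE,\NW,\SW,\SE$, so the edges incoming at $v$ (those in the two lower quadrants) form one contiguous block and likewise for the outgoing ones. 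Hence everything comes down to showing that $\psi^{*}$ is upward consistent.

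For that I would argue face by face. Subdivision does not change the face structure, so the faces of $\mathcal{E}^{*}$ are exactly the faces of $\mathcal{E}$, and restricting $\mathcal{E}$ to $G|_{Y}$ gives $\mathcal{U}$; thus each face $f$ of $\mathcal{U}$ is cut by the (pairwise non-crossing, because $\mathcal{E}^{*}$ is planar) gadgets replacing the $X$-edges embedded in it into several faces of $\mathcal{E}^{*}$. I would prove, by induction on $|X\setminus Y|$, that if $f$ is an inner face of $\mathcal{U}$ then every sub-face $g$ it is cut into carries exactly $n_{g}-1$ assigned sources and sinks, whereas if $f$ is the outer face of $\mathcal{U}$ then exactly one sub-face (the one $\mathcal{E}$ designates as outer) carries $n_{g}+1$ and all others carry $n_{g}-1$; summed over all faces $f$ this is precisely upward consistency of $\psi^{*}$. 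The base case $X\setminus Y=\emptyset$ follows from upward consistency of $\mathcal{U}$: then $G^{*}$ is $G|_{Y}$ with only its $Y$-gadget subdivision vertices inserted, all of which are inner vertices of $G^{*}$ and hence neither face-sources, face-sinks, nor in the image of $\psi^{*}$; moreover $\mathcal{U}$ assigns a source (sink) of $G|_{Y}$ to the face directly below (above) it, which is exactly the face that would receive its $X$-edges, so a face with no $X$-edge inside it keeps its face-source count and its assigned sources and sinks unchanged.

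For the inductive step I would select an edge $e=(u,v)$ in $X\setminus Y$ that is a \emph{leaf chord}: inside the face $f$ of $\mathcal{U}$ containing $e$, the gadget of $e$ cuts off a sub-region $f_{e}$ containing no other $X$-edge. Deleting $e$ from $G$ (removing its two gadget edges and its gadget vertex from $G^{*}$) produces a strictly smaller instance whose \textsc{2-Sat} formula is satisfied by the restriction of the given solution --- the at-most/at-least-one clauses and the gadget-forbidding clauses are literally the same, because the gadget candidate sets $\mathcal H(\cdot)$ are computed from $G|_{Y}$ and do not change, and the windrose-consistency clauses at $u$ and $v$ only get weaker, since deleting one edge from a windrose-consistent rotation leaves it windrose consistent. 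By the inductive hypothesis the resulting embedding $\widetilde{\mathcal{U}}$ (that is, $\mathcal{U}^{*}$ with the gadget of $e$ removed) is upward planar, hence upward consistent. Because $e$ is a leaf chord, the region $f_{e}$ lies inside a single face $\tilde f$ of $\widetilde{\mathcal{U}}$, and $\widetilde{\mathcal{U}}$ coincides with $\mathcal{U}$ in a neighbourhood of the closure of $f_{e}$: no other $X$-edge enters $f_{e}$, and (using the windrose-consistency clauses at $u$ and $v$) none can slip between $e$ and the boundary of $f_{e}$ in the rotation at $u$ or at $v$, as that would place it inside $f_{e}$. On the other hand, the forbidding clauses guarantee that the gadget $\mathcal H^{x}_{i}$ the solution assigns to $e$ lies in $\mathcal H(e)$, so $\mathcal{U}+\mathcal H^{x}_{i}$ is upward planar and hence, by Lemma~\ref{lem:upward-planarity} again, upward consistent; in particular $f_{e}$ --- cut off from $f$ in the bare graph $G|_{Y}$ --- carries $n_{f_{e}}\pm 1$ assigned sources and sinks, with the sign fixed by $\mathcal{E}$. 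Since $e$ cuts off literally the same region $f_{e}$, with the same boundary walk, the same face-sources and face-sinks, and the same incident sources and sinks, whether it is inserted into $f$ in $G|_{Y}$ or into $\tilde f$ in $\widetilde{\mathcal{U}}$, that count carries over verbatim, and then the count for the complementary sub-face is forced by the count for $\tilde f$ (known from the inductive hypothesis) through exactly the same arithmetic as for the bare insertion. This yields the claimed counts for all sub-faces of $f$ and closes the induction.

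I expect the transfer step at the end of the previous paragraph to be the main obstacle: the gadget candidates were tested individually against $\mathcal{U}$, and one must show that the gadget picked for a leaf chord still behaves correctly once all other $X$-gadgets are present. The three ingredients that make it go through are (a) planarity of the fixed embedding $\mathcal{E}$, which forces the gadget chords inside a face to be non-crossing and hence to admit a leaf; (b) the specialness of the sought windrose embeddings together with the windrose-consistency constraints (cf.\ Lemma~\ref{lem:xy-drawing-to-windrose-drawing}), which fix how the $X$-edges around a shared endpoint interleave and thereby forbid any other $X$-edge from intruding into the sector of the leaf chord; and (c) the inductive hypothesis, which delivers upward consistency of the smaller configuration $\widetilde{\mathcal{U}}$, so that reinserting $e$ only requires accounting for the single, already balanced piece $f_{e}$. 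The remaining work is the bookkeeping of how $u$, $v$, and the new gadget vertex move between being face-sources/face-sinks and interior corners, and of how a source or sink of $G|_{Y}$ at $u$ or $v$ may stop being assigned once incident $X$-edges appear --- but these are exactly the local case distinctions already carried out when computing $\mathcal H(e)$ in Lemma~\ref{lem:gadget-candidate-set-linear-time}, so no new idea is needed there.
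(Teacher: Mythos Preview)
Your strategy is sound and takes a genuinely different route from the paper. Both proofs are inductive, but you peel off \emph{leaf chords} one at a time, so that one of the two faces created by the reinserted gadget lies entirely inside~$G|_Y$ and inherits its counts verbatim from the bare insertion~$\mathcal U+\mathcal H^x_i$; the complementary face is then forced by the inductive hypothesis. The paper instead inserts gadgets one by one in arbitrary order and, for the inductive step, proves an explicit inclusion--exclusion identity: for two independent edges~$e,e'$ in a face~$f$, with~$f$ split into~$f_1,f_2$ by~$e$ alone, into~$g_1,g_2$ by~$e'$ alone, and into~$f_1,h,g_2$ by both, one has $\psi(h)=\psi(f_2)+\psi(g_1)-\psi(f)$ and $n_h=n_{f_2}+n_{g_1}-n_f$, whence consistency of~$h$ follows from consistency of~$f,f_2,g_1$. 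Your leaf-chord decomposition exploits the tree structure of non-crossing chords and is more geometric; the paper's identity is more algebraic and handles an arbitrary new chord without first locating a leaf.

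One point deserves more care than you give it. When your leaf chord~$e$ shares an endpoint~$u$ with another $X$-edge~$e''$ lying on the \emph{non-leaf} side, the claim that~$f_e$ has ``the same incident sources and sinks'' in~$\mathcal U+\mathcal H^x_i$ as in~$\widetilde{\mathcal U}+\mathcal H^x_i$ is not automatic: whether~$u$ is a source or sink of the ambient graph---and hence whether it contributes to~$\psi(f_e)$ at all---depends on \emph{all} edges at~$u$, including the gadget edge of~$e''$, which is present in~$\widetilde{\mathcal U}$ but absent from~$\mathcal U$. The paper sidesteps this entirely by subdividing every $X$-gadget once more and treating the stubs incident to~$u$ and~$v$ as $Y$-edges, thereby reducing adjacent edges to independent ones (Figure~\ref{fig:correctness}(d,\,e)). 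Your appeal to specialness and to ``the same local case distinctions as in computing~$\mathcal H(e)$'' can presumably be made to work, but it is a genuine case analysis at~$u$ and~$v$, not the triviality your final paragraph suggests.
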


\begin{proof}
    We show that~$\mathcal U^*$ is upward planar inductively by showing that inserting the gadgets one by one preserves upward planarity.
    Edges that are inserted into different faces of~$\mathcal U$ can be treated separately.
    First, consider two independent edges~$e = \{ u, v \}, e' = \{ u', v' \}$ (directed appropriately) in~$X$ that are inserted into the same face~$f$ of~$\mathcal U$; see Figure~\ref{fig:correctness}.
    \begin{figure}
        \centering
        \includegraphics{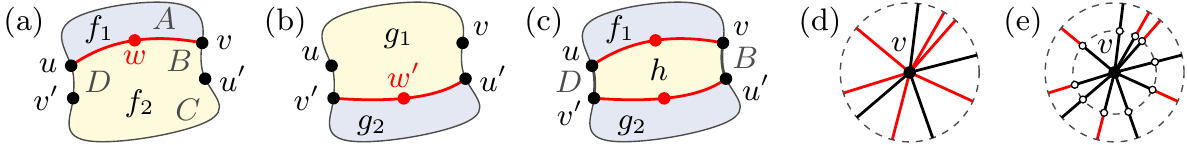}
        \caption{
            Proof of Lemma~\ref{lem:simultaneous-x-edge-insertion}.
            If~$f_1, f_2, f_1', f_2'$ are upward planar, then so is~$h$~(a--c).
            The case of adjacent edges reduces to the case of independent edges~(d,~e).
        }
        \label{fig:correctness}
    \end{figure}
    Because~$\mathcal E$ is planar the endpoints of~$e$ and~$e'$ do not alternate.
    Let~$u, A, v, B, u', C, v', D$ denote the boundary of~$f$, where~$A, B, C, D$ are sets of vertices.
    The solution of the \textsc{2-Sat} instance prescribes gadgets~$\mathcal H^x_i, \mathcal H^x_j$ that are inserted for~$e, e'$, respectively.
    Let~$w$ and~$w'$ denote the subdivision vertex of~$\mathcal H^x_i$ and~$\mathcal H^x_j$, respectively.
    Inserting~$\mathcal H^x_i$ for~$e$ splits~$f$ into two faces~$f_1, f_2$ such that the boundary of~$f_1$ is~$u, A, v, w$ and the boundary of~$f_2$ is~$u, w, v, B, u', C, v', D$; see Figure~\ref{fig:correctness}~(a).
    Similarly, inserting~$\mathcal H^x_j$ for~$e'$ splits~$f$ into two faces~$g_1, g_2$ such that the boundary of~$g_1$ is~$u', w', v', D, u, A, v, B$ and the boundary of~$g_2$ is~$u', C, v', w'$; see Figure~\ref{fig:correctness}~(b).
    Finally, inserting both~$\mathcal H^x_i$ and~$\mathcal H^x_j$ splits~$f$ into three faces, namely~$f_1$,~$g_2$ and a face~$h$ whose boundary is~$u, w, v, B, u', w', v', D$; see Figure~\ref{fig:correctness}~(c).
    From the construction of the gadget candidate sets we know that~$f, f_1, f_2, g_1, g_2$ are all upward consistent.
    We are left to show that~$h$ is upward consistent as well.
    To this end, we use Lemma~\ref{lem:upward-planarity}.

    Let~$z$ be some face of an upward embedding and let~$Z$ be a set of vertices on the boundary of~$z$.
    For the scope of this proof, let~$\psi(Z, z)$ denote the number of sources and sinks in~$Z$ assigned to~$z$.
    We have the following.
    \begin{align}
        \psi(f) &= \colorbox{green!25}{$\psi(B \cup D, f)$} + \colorbox{yellow!40}{$\psi(A, f)$} + \colorbox{red!20}{$\psi(C, f)$} + \colorbox{orange!30}{$\psi(\{ u, v \}, f)$} + \colorbox{blue!15}{$\psi(\{ u', v' \}, f)$} \label{eq:psif}\\
        \psi(f_2) &= \psi(B \cup D, f_2) + \psi(C, f_2) + \psi(\{u, v, w\}, f_2) + \psi(\{u', v'\}, f_2) \notag\\
                  &= \colorbox{green!25}{$\psi(B \cup D, f)$} + \colorbox{red!20}{$\psi(C, f)$} + \colorbox{gray!20}{$\psi(\{u, v, w\}, f_2)$} + \colorbox{blue!15}{$\psi(\{u', v'\}, f)$} \label{eq:psif2}\\
        \psi(g_1) &= \psi(B \cup D, g_1) + \psi(A, g_1) + \psi(\{u, v\}, g_1) + \psi(\{u', v', w'\}, g_1) \notag\\
                  &= \colorbox{gray!20}{$\psi(B \cup D, f)$} + \colorbox{yellow!40}{$\psi(A, f)$} + \colorbox{orange!30}{$\psi(\{u, v\}, f)$} + \colorbox{gray!20}{$\psi(\{u', v', w'\}, g_1)$} \label{eq:psig1}\\
        \psi(h) &= \psi(B \cup D, h) + \psi(\{u, v, w\}, h) + \psi(\{u', v', w'\}, h) \notag\\
                &= \colorbox{gray!20}{$\psi(B \cup D, f)$} + \colorbox{gray!20}{$\psi(\{u, v, w\}, f_2)$} + \colorbox{gray!20}{$\psi(\{u', v', w'\}, g_1)$} \label{eq:psih}
    \end{align}
    Observe that equations (\ref{eq:psif2}--\ref{eq:psih}) only hold because the upward embedding~$\mathcal U$ of~$G|_{Y}$ is fixed and the edges~$e, e'$ are independent.
    Adding~(\ref{eq:psif2}) and~(\ref{eq:psig1}) and then subtracting~(\ref{eq:psif}) gives~(\ref{eq:psih}) (gray terms remain, terms of the same color cancel out each other).
    This shows~$\psi(h) = \psi(f_2) + \psi(g_1) - \psi(f)$.

    Again, let~$z$ be some face of an upward embedding and let~$Z$ be a set of vertices on the boundary of~$z$.
    For the scope of this proof, let~$n(Z, z)$ denote the number of face-sources of~$z$ in~$Z$.
    We have the following.
    \begin{align}
        n_f &= \colorbox{green!25}{$n(B \cup D, f)$} + \colorbox{yellow!40}{$n(A, f)$} + \colorbox{red!20}{$n(C, f)$} + \colorbox{orange!30}{$n(\{ u, v \}, f)$} + \colorbox{blue!15}{$n(\{ u', v' \}, f)$} \label{eq:nf}\\
        n_{f_2} &= n(B \cup D, f_2) + n(C, f_2) + n(\{ u, v, w \}, f_2) + n(\{ u', v' \}, f_2) \notag\\
                &= \colorbox{green!25}{$n(B \cup D, f)$} + \colorbox{red!20}{$n(C, f)$} + \colorbox{gray!20}{$n(\{ u, v, w \}, f_2)$} + \colorbox{blue!15}{$n(\{ u', v' \}, f)$} \label{eq:nf2}\\
        n_{g_1} &= n(B \cup D, g_1) + n(A, g_1) + n(\{u, v \}, g_1) + n(\{u', v', w'\}, g_1) \notag\\
                &= \colorbox{gray!20}{$n(B \cup D, f)$} + \colorbox{yellow!40}{$n(A, f)$} + \colorbox{orange!30}{$n(\{u, v \}, f)$} + \colorbox{gray!20}{$n(\{u', v', w'\}, g_1)$} \label{eq:ng1}\\
        n_h &= n(B \cup D, h) + n(\{u, v, w\}, h) + n(\{u', v', w'\}, h) \notag\\
            &= \colorbox{gray!20}{$n(B \cup D, f)$} + \colorbox{gray!20}{$n(\{u, v, w\}, f_2)$} + \colorbox{gray!20}{$n(\{u', v', w'\}, g_1)$} \label{eq:nh}
    \end{align}
    Similarly, adding~(\ref{eq:nf2}) and~(\ref{eq:ng1}) and then subtracting~(\ref{eq:nf}) gives~(\ref{eq:nh}), which shows that~$n_h = n_{f_2} + n_{g_1} - n_f$.
    Let~$k = 1$ if~$f$ is the outer face and let~$k = -1$ if~$f$ is an inner face.
    Lemma~\ref{lem:upward-planarity} gives
    \begin{align}
        \psi(f)   &= n_f     + k  \label{eq:consistent-f}  \\
        \psi(f_2) &= n_{f_2} + k  \label{eq:consistent-f2} \\
        \psi(g_1) &= n_{g_1} + k. \label{eq:consistent-g1}
    \end{align}
    Adding~(\ref{eq:consistent-f2}) and~(\ref{eq:consistent-g1}) and then subtracting~(\ref{eq:consistent-f}) gives
    \[
        \psi(f_2) + \psi(g_1) - \psi(f) = n_{f_2} + n_{g_1} - n_f + k,
    \]
    which, together with~$\psi(h) = \psi(f_2) + \psi(g_1) - \psi(f)$ and~$n_h = n_{f_2} + n_{g_1} - n_f$ as shown above, gives~$\psi(h) = n_h + k$.
    This shows that~$h$ is upward consistent in the sense of Lemma~\ref{lem:upward-planarity}.

    The same idea extends to the case of faces that are bounded by gadgets corresponding to more than two independent edges.
    The case of adjacent edges reduces to the case of independent edges by subdividing for each edge~$(u, v) \in X$ the edges of the gadget that replace~$(u, v)$ and treating the subdivision edges incident to~$u$ and~$v$ as edges in~$Y$; see Figure~\ref{fig:correctness}~(d,~e).
    The choice of each gadget specifies the direction of the edges incident to~$u$ and~$v$, and, possibly, the assignment of~$u$ and~$v$ to a face.
\end{proof}

Lem.~\ref{lem:simultaneous-x-edge-insertion} gives that~$\mathcal U^*$ is upward planar.
The clauses of the \textsc{2-Sat} instance are designed such that (i)~$\mathcal U^*$ is windrose consistent by Lem.~\ref{lem:windrose-planar-embedding-consistent}, i.e.,~$\mathcal U^*$ is windrose planar, and (ii)~$\mathcal U^*$ is special.
Lem.~\ref{lem:xy-drawing-to-windrose-drawing} gives that~$G$ is~$xy$-planar if and only if the \textsc{2-Sat} instance is satisfiable.
To compute an~$xy$-planar drawing of~$G$, use the linear-time windrose-planar drawing algorithm of Angelini et al.~\cite{DBLP:journals/talg/AngeliniLBDKRR18} to compute a windrose planar drawing of~$G^*$, which induces an~$xy$-planar drawing of~$G$.

\begin{theorem}
    Let~$G = (V, E)$ be a directed graph together with subsets~$X, Y$ of its edges~$E$ such that~$X \cup Y = E$, a planar embedding~$\mathcal E$ of~$G$ and an upward planar embedding~$\mathcal U$ of~$G|_Y$.
    It can be tested in linear time whether there exists an~$xy$-planar drawing of~$G$ whose underlying planar embedding is~$\mathcal E$ and whose underlying upward planar embedding restricted to~$G|_Y$ is~$\mathcal U$.
    In the positive case, such a drawing can be computed in linear time as well.
\end{theorem}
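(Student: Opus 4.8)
\noindent\emph{Proof plan.}
The plan is to assemble the structural results of Sections~\ref{sec:windrose-planarity}--\ref{sec:xy-planarity-testing} into an algorithm and then verify correctness and running time. The starting point is Lemma~\ref{lem:xy-drawing-to-windrose-drawing}: deciding whether $G$ admits an $xy$-planar drawing whose underlying planar embedding is~$\mathcal E$ and whose underlying upward planar embedding restricted to~$G|_Y$ is~$\mathcal U$ is equivalent to deciding whether there is a derived graph~$G^*$ of~$G$ together with a \emph{special} windrose planar embedding that induces~$\mathcal E$ and~$\mathcal U$. Hence it suffices to decide the latter in linear time and, in the positive case, to output such a~$G^*$ with its embedding; the drawing is then obtained from it.

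First I would set up the \textsc{2-Sat} instance of Section~\ref{sec:xy-planarity-testing}. By Lemma~\ref{lem:gadget-candidate-set-linear-time} the gadget candidate sets~$\mathcal H(e)$ for all $e \in X$ can be precomputed in linear time. The instance has the four variables $e^{\NE}, e^{\NW}, e^{\SW}, e^{\SE}$ for every edge of~$G^*$ (a constant number of variables per edge of~$G$), together with: at-most-one clauses forcing each edge of~$G^*$ into at most one quadrant; clauses forcing each edge into at least one quadrant, chosen according to whether the underlying edge of~$G$ lies in~$X$ or in~$Y$ so that it leaves its tail and enters its head through the correct half-plane; for each $e \in X$ and each $\mathcal H_i^x \notin \mathcal H(e)$ a clause forbidding the quadrant pattern that would induce~$\mathcal H_i^x$; and, for every vertex~$v$ of~$G^*$, the ordering clauses encoding windrose consistency via Lemma~\ref{lem:windrose-planar-embedding-consistent}, supplemented in the four single-quadrant cases by the clauses enforcing specialness. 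Since each vertex of~$G$ contributes only a constant number of clauses per incident edge in its cyclic order in~$\mathcal E$ and~$\mathcal U$, the instance has linear size, can be built in linear time, and can be solved in linear time.

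Next I would prove that the instance is satisfiable iff there is a derived graph~$G^*$ with a special windrose planar embedding inducing~$\mathcal E$ and~$\mathcal U$. The forward-to-assignment direction is immediate: read off, for each gadget edge, the quadrant it occupies. For the converse, a satisfying assignment determines a quadrant for every gadget edge, hence (by the clauses of the third type) a unique gadget per edge of~$G$, hence a derived graph~$G^*$ and a candidate embedding~$\mathcal U^*$ that induces~$\mathcal E$ on~$G$ and~$\mathcal U$ on~$G|_Y$; the ordering clauses guarantee that~$\mathcal U^*$ is windrose consistent at every vertex (built into the gadgets at subdivision vertices, enforced by the clauses at the original vertices) and that it is special. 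The one remaining point — and this is the crux — is that~$\mathcal U^*$ is genuinely upward planar, not merely that each chosen gadget could be inserted into~$G|_Y$ upward-planarly in isolation. This is exactly Lemma~\ref{lem:simultaneous-x-edge-insertion}: because~$\mathcal U$ is fixed, the face-source/face-sink counts and the assigned-source/sink counts behave additively when independent gadgets are inserted into a common face, so simultaneous insertion preserves upward consistency, and the adjacent case reduces to the independent one. With~$\mathcal U^*$ upward planar and windrose consistent, Lemma~\ref{lem:windrose-planar-embedding-consistent} yields that~$\mathcal U^*$ is windrose planar, and Lemma~\ref{lem:xy-drawing-to-windrose-drawing} closes the loop back to $xy$-planarity. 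I expect this global-versus-local gap to be the main obstacle; it is precisely what Lemma~\ref{lem:simultaneous-x-edge-insertion} resolves, and the surrounding work (bounding the number of clauses, checking that each construction step is linear) is routine bookkeeping.

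Finally, for the constructive part: given a satisfying assignment, build~$G^*$ and~$\mathcal U^*$ as above, apply the linear-time windrose-planar drawing algorithm of Angelini et al.~\cite{DBLP:journals/talg/AngeliniLBDKRR18} to obtain a windrose planar drawing of~$G^*$ realizing~$\mathcal U^*$, and then contract each gadget back into a single polyline edge of~$G$. Each gadget consists of two straight segments plus the subdivision structure, so the resulting drawing has at most three bends per edge; by construction it is $xy$-planar with underlying planar embedding~$\mathcal E$ and upward planar embedding~$\mathcal U$ on~$G|_Y$. Every step runs in linear time, so the whole algorithm does, which proves both the testing and the construction claims. \qed
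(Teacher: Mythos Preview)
Your proposal is correct and follows essentially the same approach as the paper: reduce via Lemma~\ref{lem:xy-drawing-to-windrose-drawing} to finding a derived graph with a special windrose planar embedding, encode this as the linear-size \textsc{2-Sat} instance of Section~\ref{sec:xy-planarity-testing} using Lemma~\ref{lem:gadget-candidate-set-linear-time}, invoke Lemma~\ref{lem:simultaneous-x-edge-insertion} for the local-to-global step and Lemma~\ref{lem:windrose-planar-embedding-consistent} for windrose planarity, and draw via Angelini et al. The only minor slip is the phrase ``two straight segments'': the Angelini et al.\ algorithm yields at most one bend per edge of~$G^*$, which together with the subdivision vertex gives the three-bends bound.
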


\section{Conclusion}
\label{sec:conclusion}

We introduced and studied the concept of~$xy$-planarity which is particularly suitable to draw two posets on the same ground set, one from bottom to top and the other from left to right.
Every~$xy$-planar drawing of a graph~$G$ induces a derived windrose planar graph~$G^*$, which implies that every~$xy$-planar graph admits a polyline drawing in polynomial area with at most three bends per edge.
Because~$xy$-planarity generalizes both upward planarity and windrose planarity,~$xy$-planarity testing is \textsf{NP}-complete in general.
We considered the case that the upward part~$G|_Y$ is a connected spanning subgraph of~$G$ whose upward embedding~$\mathcal U$ is fixed, and that the planar embedding~$\mathcal E$ of~$G$ is fixed as well.
For this case, we have given a linear-time~$xy$-planarity testing algorithm.
It uses the connection to derived windrose planar graphs, a novel combinatorial view of windrose planarity and a careful analysis of upward planar embeddings and windrose planar embeddings.

\bibliographystyle{plain}
\bibliography{bibliography}

\end{document}